\documentclass[12pt]{article}

\usepackage{fullpage, parskip}
\usepackage{amsmath, amsthm, amsfonts}
\usepackage{color}
\usepackage{graphicx}
\usepackage{pdfpages}

\newtheorem{theorem}{Theorem}[section]
\newtheorem{corollary}[theorem]{Corollary}
\newtheorem{lemma}[theorem]{Lemma}
\newtheorem{conjecture}[theorem]{Conjecture}

\date{}
\begin{document}

\begin{center}
	\begin{Large}{Chromatic Polynomials of Oriented Graphs}\end{Large}\\
	
	\vspace{.25in}
	
	 Danielle Cox$^{1}$ \quad \quad Christopher Duffy$^{2}$\\
	 
	 	\vspace{.15in}
	 \begin{small}
	 $^{1}$Department of Mathematics,  Mount Saint Vincent University, Halifax, CANADA\\
	 $^{2}$Department of Mathematics, University of Saskatchewan, Saskatoon, CANADA\\
	\end{small}

\end{center}

\begin{abstract}
The oriented chromatic polynomial of a oriented graph outputs the number of oriented $k$-colourings for any input $k$.
We fully classify those oriented graphs for which the oriented graph has the same chromatic polynomial as the underlying simple graph, closing an open problem posed by Sopena.
We find that such oriented graphs can be both identified and constructed in polynomial time as they are exactly the family of quasi-transitive oriented co-interval graphs.
We study the analytic properties of this polynomial and show that there exist oriented graphs which have chromatic polynomials have roots, including negative real roots,  that cannot be realized as the root of any chromatic polynomial of a simple graph.
\end{abstract}

An \emph{oriented graph} arises by assigning  directions to the edges of a simple graph.
For an oriented graph, $G$, we let $U(G)$ denote the underlying simple graph.
We say that $G$ is an \emph{orientation} of $U(G)$.
Alternately, an oriented graph is an irreflexive and  anti-symmetric digraph.

The generalization of proper colouring to graph homomorphism provides a path to  define proper colouring for oriented graphs in a way that takes into account the orientation.
For an oriented graph $G$, an \emph{oriented $k$-colouring} of $G$ is a homomorphism to a tournament (i.e., an orientation of a complete graph) on $k$ vertices.
One can see that this definition of oriented colouring is equivalent the following one, which dispenses with the need to invoke homomorphism.
For an oriented graph $G= (V_G,A_G)$, a function $c:V_G \to \{1,2,\dots k\}$ is an \emph{oriented $k$-colouring} when
	\begin{enumerate}
		\item $c(u) \neq c(v)$ for all $uv \in A_G$, and
		\item for $uv, xy \in A_G$, if $c(u) = c(y)$, then $c(v) \neq c(x)$.
	\end{enumerate}
This second condition implies directly that non-adjacent vertices at the end of a directed path of length two (a $2$-dipath) are assigned distinct colours in any oriented colouring.

Since their introduction by Courcelle in his treatment of monadic second order logic and graph structure \cite{CO94}, oriented colourings have provided a fertile area for fundamental research in mathematics and theoretical computer science.  
Many of the questions that have interested both applied and theoretical researchers in the study of graph colourings find an analogue in the study of oriented graphs.
In addition to bounds for a variety of graph families \cite{BO99, DMS18, DS14,FRR03,RASO94}, researchers have examined the computational complexity of related decision problems \cite{BHG88,KM04b}, the notion of clique for oriented graphs \cite{BDS17}, oriented arc-colourings \cite{OPS08}, oriented list-colourings \cite{T01} and even an oriented colouring game \cite{NS01}.
An excellent overview of the state-of-the-art is given in \cite{SO15}.

For an oriented graph  $G$, we define the \emph{oriented chromatic polynomial} to be the unique interpolating polynomial $f_o(G,\lambda)$ so that $f_o(G,k)$ is the number of $k$-colourings of $G$.

Figure \ref{fig:PolyExample} gives an oriented graph together with its oriented chromatic polynomial.

\begin{figure}	
	
	\begin{center}
		\includegraphics[width = 0.75\linewidth]{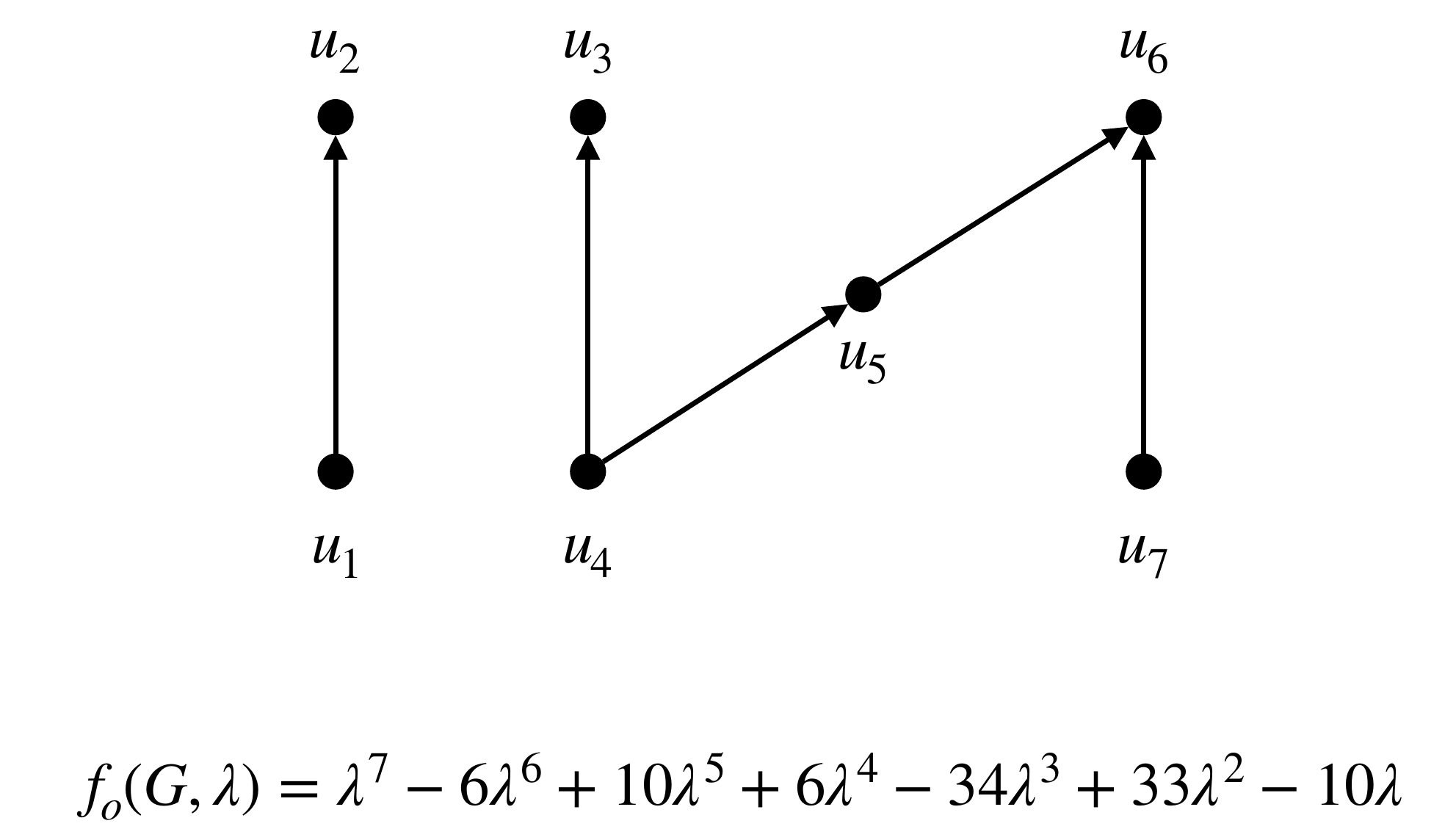}
	\end{center}
	\caption{An oriented graph and its oriented chromatic polynomial.}
	\label{fig:PolyExample}
\end{figure}

The oriented chromatic polynomial was first introduced by Sopena in \cite{S94}.
Here he established some basic properties of $f_o(G,k)$, provided a recursive construction, and exhibited some oriented graphs for which the analytic behaviour of the oriented chromatic polynomial diverged wildly from possible behaviour of the chromatic polynomial of any graph. 

The definition of the oriented chromatic polynomial follows the from definition of the \emph{chromatic polynomial}, $f(\Gamma,\lambda)$, for a graph $\Gamma$.
The polynomial $f(\Gamma,\lambda)$ is defined to be the unique interpolating polynomial so that $f(\Gamma,k)$ is the number of $k$-colourings of $\Gamma$.

We call an oriented graph \emph{chromatically invariant} when $f_o(G,\lambda) = f(U(G),\lambda)$.
One can see that the oriented graph given in Figure \ref{fig:PolyExample} is not chromatically invariant; the coefficients of $f_o(G,\lambda)$ do not alternate sign, a feature of every chromatic polynomial.
This further implies, in fact, that there is no graph $\Gamma$ such that $f_o(G,\lambda) = f(\Gamma,\lambda)$.
We say that a graph $\Phi$ and an oriented graph $H$ are \emph{chromatically equivalent} if $f_o(H,\lambda) = f(\Phi,\lambda)$.

The zeros of graph polynomials have been an active area of research for many years. One such graph polynomial whose zeroes have been studied is the chromatic polynomial. The real roots of such polynomials are dense in the interval [32/27,$\infty)$ and there are no real roots in the interval $(0,1) \cup (1,32/27]$\cite{jackson,thomassen}. 
The complex roots of such polynomials are dense in the complex plane \cite{sokalroots}.

Our work proceeds as follows. 
In the following section we review and recontextualize the recursive construction of $f_o(G,k)$ given by Sopena for the purposes of providing an explicit formula for the coefficient of $\lambda^{n-2}$ in $f_o(G,k)$. 
In Section \ref{sec:G=G} we provide a full classification of chromatically invariant oriented graphs, show they can be recognized in polynomial time and further explore the relationship between the oriented chromatic polynomial and the chromatic polynomial.
We provide a partial answer to the more general question of finding chromatically equivalent pairs of oriented graphs and graphs.
In Section \ref{sec:roots} we study the roots of oriented chromatic polynomials and show that there exist oriented graphs whose oriented chromatic polynomials have roots that cannot be realized as a root of a chromatic polynomial.
In particular we exhibit oriented graphs whose polynomials have negative real roots -- a feature of no chromatic polynomial.

All graphs considered herein are simple. 
That is, we do not allow loops or multiple edges.
Further, in graphs with more than one type of adjacency, we allow at most one type of adjacency between a pair of vertices.
We refer the reader to \cite{bondy} for  graph theoretic definitions and notation.

\section{The Oriented Chromatic Polynomial} \label{sec:OPoly}
Let $G = (V,E,A)$ be a mixed graph. 
That is, $G$ is a graph in which a subset (possibly empty) of the edges have been oriented to be arcs.
We say that $c$ is an \emph{oriented colouring} of $G$ when $c$ is an oriented colouring when $G$ is restricted to the arcs and a proper colouring when $G$ is restricted to the edges.
Notice that if every pair of vertices is either adjacent or at the ends of a directed path of length $2$, then every vertex must receive a distinct colour in every colouring.
We define the \emph{oriented chromatic polynomial} of a mixed graph analogously to that of oriented graphs.
In this section we observe that the oriented chromatic polynomial introduced by Sopena can be generalized as the oriented chromatic polynomial of mixed graphs.
This generalization allows us to find a closed form for the third coefficient of the oriented chromatic polynomial of mixed graphs.
As every oriented graph is a mixed graph with an empty arc set, this expression leads us to a closed form for the third coefficient of the chromatic polynomial of an oriented graph.
The expression for the third coefficient provides us with a tool to study chromatically invariant oriented graphs.

We begin by providing a recursive formula for the oriented chromatic polynomial of mixed graphs.
Let $G = (V_G,A_G,E_G)$ be a mixed graph.
If every pair of vertices is either adjacent or at the ends of a directed path of length $2$ (a $2$-dipath), then 

\[f_o(G,\lambda) = \prod_{i=0}^{i=n-1} (\lambda - i)\]

Otherwise, let $u$ and $v$ be a pair of vertices that are neither adjacent nor at the ends of a directed path of length two. In this case we have 

\[f_o(G,\lambda) = f_o(G+uv,\lambda) + f_o(G_{uv},\lambda),\]

where 
\begin{itemize}
	\item $G +uv$ is the mixed graph formed from $G$ by adding an edge between $u$ and $v$; and
	\item $G_{uv}$ is the mixed graph formed from $G$ by identifying $u$ and $v$ into a single vertex, deleting all parallel arcs and edges, and deleting any edge that is parallel with an arc.
\end{itemize}
Following the usual convention of having the picture of a graph stand in for its polynomial, an example of this recursion is provided in Figure \ref{fig:ReduceExample}.
\begin{figure}
	\includegraphics[width=\linewidth]{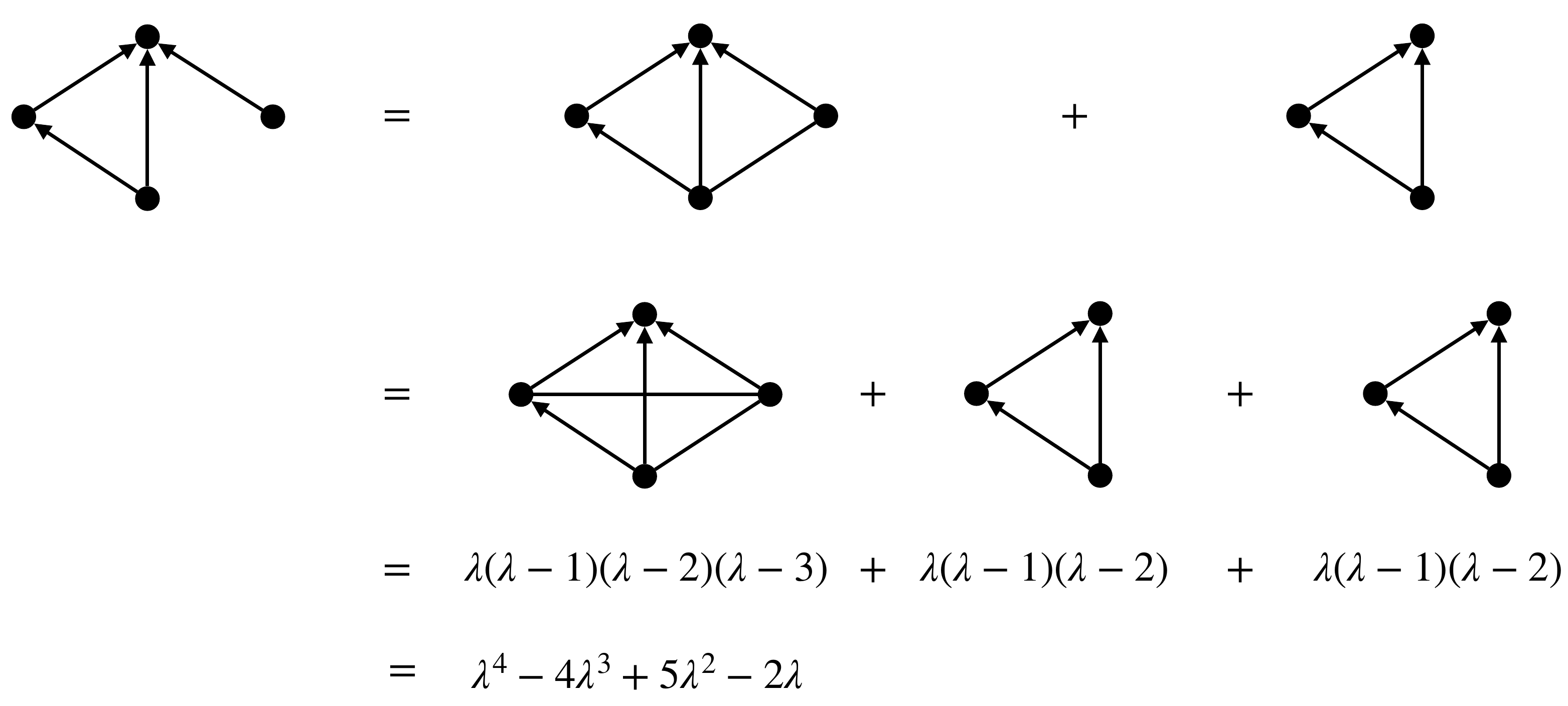}
	\caption{Computing the oriented chromatic polynomial by way of reduction.}
	\label{fig:ReduceExample}
\end{figure}

The correctness of this reduction follows from the proof of the reduction provided  in \cite{S94}.
The appendix  gives Maple code for generating the oriented chromatic polynomial of a mixed graph.

For a mixed graph $G$, let $\mathcal{D}_G$ be the set of pairs of vertices that are at the ends of an induced $2$-dipath.
In Figure~\ref{fig:PolyExample} we have $\mathcal{D}_G = \left\lbrace\{u_4,u_6\}\right\rbrace$. \\

\begin{theorem} \cite{S94} \label{thm:SopenaTheorem}
	For any mixed graph $G = (V_G,E_G,A_G)$ with $n$ vertices
	\begin{enumerate}
		\item $f(G,\lambda)$ is a polynomial of order $n$ in $\lambda$;
		\item the coefficient of $\lambda^n$ is $1$;
		\item $f(G,\lambda)$ has no constant term;
		\item the coefficient of $\lambda^{n-1}$ is $-(|A_G| + |E_G| + |
		\mathcal{D}_G|)$; and
		\item if $G$ has an isolated vertex $x$, then $f_o(G) = \lambda \cdot f_o(G-x, \lambda)$.
	\end{enumerate}
\end{theorem}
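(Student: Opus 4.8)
The plan is to establish parts (1)--(4) together by induction along the recursion displayed just before the statement, and to treat part (5) separately and directly.

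\textbf{Part (5).} This follows immediately from the counting definition of $f_o$. If $x$ is isolated then a map $c\colon V_G\to\{1,\dots,k\}$ is an oriented $k$-colouring of $G$ exactly when its restriction to $V_G\setminus\{x\}$ is an oriented $k$-colouring of $G-x$ (neither defining condition ever involves $x$, which meets no arc or edge), and $c(x)$ is unconstrained. Hence $f_o(G,k)=k\cdot f_o(G-x,k)$ for all positive integers $k$, and since polynomials agreeing at infinitely many points coincide, $f_o(G,\lambda)=\lambda\cdot f_o(G-x,\lambda)$.

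\textbf{Parts (1)--(4).} I would induct first on $n=|V_G|\ge 1$ and, for fixed $n$, on the number $m$ of pairs of vertices that are neither adjacent nor at the ends of a $2$-dipath; equivalently $m=\binom{n}{2}-|A_G|-|E_G|-|\mathcal{D}_G|$, where these three quantities count disjoint sets of pairs since a pair counted in $\mathcal{D}_G$ is non-adjacent by definition and at most one adjacency type is allowed between two vertices. When $m=0$ the recursion gives $f_o(G,\lambda)=\prod_{i=0}^{n-1}(\lambda-i)$, which is monic of degree $n$, has zero constant term (from the $i=0$ factor), and has $\lambda^{n-1}$-coefficient $-\sum_{i=0}^{n-1}i=-\binom{n}{2}=-(|A_G|+|E_G|+|\mathcal{D}_G|)$, the last equality because with $m=0$ every one of the $\binom{n}{2}$ pairs lies in exactly one of $A_G,E_G,\mathcal{D}_G$. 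When $m\ge 1$, choose a pair $u,v$ that is neither adjacent nor at the ends of a $2$-dipath and apply $f_o(G,\lambda)=f_o(G+uv,\lambda)+f_o(G_{uv},\lambda)$. The mixed graph $G+uv$ has $n$ vertices and parameter $m-1$ (adding an edge leaves $A_G$ and $\mathcal{D}_G$ unchanged, raises $|E_G|$ by one, and removes exactly $\{u,v\}$ from the missing pairs), while $G_{uv}$ has $n-1$ vertices; the induction hypothesis applies to both. Thus $f_o(G+uv,\lambda)$ is monic of degree $n$ and $f_o(G_{uv},\lambda)$ is monic of degree $n-1$, so their sum is monic of degree $n$ with no constant term, giving (1)--(3); and the $\lambda^{n-1}$-coefficient of the sum is $-(|A_G|+|E_G|+1+|\mathcal{D}_G|)+1=-(|A_G|+|E_G|+|\mathcal{D}_G|)$, giving (4).

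The one genuinely delicate point, and the step I would be most careful about, is the accounting in the inductive step: that passing from $G$ to $G+uv$ decreases $m$ by exactly one. This hinges on the observation that $\{u,v\}$ being a missing pair forces $\{u,v\}\notin\mathcal{D}_G$, so adding the edge $uv$ neither creates a $2$-dipath (no new arcs are introduced) nor destroys one, leaving $\mathcal{D}_G$ genuinely unchanged; this is precisely what makes the induction well-founded and simultaneously makes the additive constant in (4) come out correctly. The correctness of the recursion itself, in particular the effect of the identification operation producing $G_{uv}$, is quoted from \cite{S94} and need not be reproved.
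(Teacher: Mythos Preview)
The paper does not supply its own proof of this theorem; it is stated with a citation to \cite{S94} and left unproved. Your argument is correct and is the natural one: induct along the addition--identification recursion that the paper records just before the statement, handling the base case $m=0$ via the falling-factorial formula and the isolated-vertex claim directly from the counting definition. The bookkeeping you flag as ``delicate''---that $\mathcal{D}_G$ is unchanged when the edge $uv$ is added---is handled correctly, since $\{u,v\}\notin\mathcal{D}_G$ by choice and an undirected edge can neither create a $2$-dipath nor make any other pair adjacent. This is presumably the argument Sopena gives in \cite{S94}, but as the present paper does not reproduce it there is nothing further to compare.
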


Let $G=(V_G,A_G,E_G)$ be an mixed graph with non-incident arcs $uv,xy \in A_G$. We say that the arcs $uv,xy$ are \emph{obstructing}, when
\begin{enumerate}
	\item $u$ and $y$ are not the ends of a $2$-dipath;
	\item $v$ and $x$ are not the ends of a $2$-dipath and
	\item $uy, vx \notin E_G$.
\end{enumerate}  

Let $\mathcal{O}_G$ denote the set of unordered pairs of obstructing arcs in $G$, a mixed graph.
In Figure \ref{fig:PolyExample} we have \[\mathcal{O}_G = \left\lbrace
 \{ u_1u_2,u_4u_3 \},  
 \{ u_1u_2,u_4u_5 \}, 
 \{ u_1u_2,u_5u_6 \}, 
 \{ u_1u_2,u_7u_6 \} 
  \right\rbrace.\] 

\begin{lemma}\label{lem:obstruct}
	If $\mathcal{O}_G \neq \emptyset$ or $\mathcal{D}_G \neq \emptyset$, then $f_o(G,\lambda) \neq f(U(G),\lambda)$.
\end{lemma}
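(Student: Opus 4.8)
The plan is to compare the $\lambda^{n-1}$ coefficient of $f_o(G,\lambda)$ with that of $f(U(G),\lambda)$ when $\mathcal{D}_G\neq\emptyset$, and to compare the $\lambda^{n-2}$ coefficient when $\mathcal{O}_G\neq\emptyset$ (with $\mathcal{D}_G=\emptyset$). By Theorem~\ref{thm:SopenaTheorem}(4), the coefficient of $\lambda^{n-1}$ in $f_o(G,\lambda)$ is $-(|A_G|+|E_G|+|\mathcal{D}_G|)$, whereas $U(G)$ is a simple graph with $|A_G|+|E_G|$ edges (here I am using that $G$ has no pair of vertices joined by both an arc and an edge, which is part of the standing conventions), so the coefficient of $\lambda^{n-1}$ in $f(U(G),\lambda)$ is $-(|A_G|+|E_G|)$. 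Hence if $\mathcal{D}_G\neq\emptyset$ these two polynomials already differ in the $\lambda^{n-1}$ term, and we are done with that case.

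So assume $\mathcal{D}_G=\emptyset$ but $\mathcal{O}_G\neq\emptyset$. Now the $\lambda^{n-1}$ coefficients agree, and I would look one degree down. The strategy here is to use the closed form for the coefficient of $\lambda^{n-2}$ that the paper advertises it will derive in this section (the ``third coefficient'' of the oriented chromatic polynomial of a mixed graph). For an ordinary simple graph $\Gamma$ with $m$ edges it is classical that the coefficient of $\lambda^{n-2}$ in $f(\Gamma,\lambda)$ is $\binom{m}{2}-t(\Gamma)$, where $t(\Gamma)$ is the number of triangles of $\Gamma$. I expect the mixed-graph analogue to be of the form $\binom{|A_G|+|E_G|+|\mathcal{D}_G|}{2}$ minus a count of certain small configurations (triangles on arcs/edges, transitive triangles, $2$-dipaths interacting with a third vertex, and—crucially—pairs of obstructing arcs), where each obstructing pair contributes an extra $+1$ relative to what the underlying graph sees, exactly because an obstructing pair is a pair of ``forced-distinct'' constraints in $G$ that corresponds to no edge, and no triangle, in $U(G)$. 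Plugging $\mathcal{D}_G=\emptyset$, one should get that the coefficient of $\lambda^{n-2}$ in $f_o(G,\lambda)$ exceeds that of $f(U(G),\lambda)$ by exactly $|\mathcal{O}_G|>0$, which forces $f_o(G,\lambda)\neq f(U(G),\lambda)$.

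The main obstacle is pinning down the third-coefficient formula precisely enough that the obstructing-pair term is isolated cleanly and is genuinely additive (i.e. that obstructing pairs are not silently cancelled by, or conflated with, triangles or $2$-dipath configurations already counted). Concretely, I would verify via the reduction $f_o(G,\lambda)=f_o(G+uv,\lambda)+f_o(G_{uv},\lambda)$: track how $|A_G|+|E_G|+|\mathcal{D}_G|$, the triangle-type counts, and $|\mathcal{O}_G|$ change under adding the edge $uv$ versus identifying $u$ and $v$, and check by induction on the number of non-adjacent, non-$2$-dipath pairs that the claimed formula is an invariant of the recursion with base case the complete-mixed-graph product $\prod_{i=0}^{n-1}(\lambda-i)$ (for which $\mathcal{O}_G=\emptyset$ and the formula reduces to the known $\binom{n}{2}$-type value). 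Once that formula is in hand, the lemma is immediate from the two coefficient comparisons above; alternatively, if one wants to avoid the full formula, one can argue directly that a single obstructing pair can be ``uncontracted'' to show that any witness forces a strict discrepancy, but the coefficient approach is cleaner and reuses machinery the section is building anyway.
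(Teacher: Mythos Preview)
Your approach is correct but takes a substantially heavier route than the paper. The paper's proof is three lines: every oriented $k$-colouring of $G$ is in particular a proper $k$-colouring of $U(G)$, so $f_o(G,k)\le f(U(G),k)$ for every $k$; if $G$ has an induced $2$-dipath or a pair of obstructing arcs, then some proper colouring of $U(G)$ violates the second oriented-colouring condition, making the inequality strict for some $k$; two polynomials disagreeing at an integer are unequal. No coefficient formulas are needed.

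Your coefficient comparison also works (modulo a sign slip: with $\mathcal{D}_G=\emptyset$ the $\lambda^{n-2}$ coefficient of $f_o(G,\lambda)$ is $\binom{|A_G|+|E_G|}{2}-|T_G|-|\mathcal{O}_G|$, which is \emph{smaller} than that of $f(U(G),\lambda)$ by $|\mathcal{O}_G|$, not larger---though the conclusion is unaffected). It has the virtue of quantifying the discrepancy exactly. However, it leans on Theorem~\ref{thm:Mixed3rd}, which in the paper's ordering appears \emph{after} this lemma; there is no actual circularity (the proof of Theorem~\ref{thm:Mixed3rd} does not invoke Lemma~\ref{lem:obstruct}), but invoking the third-coefficient machinery here is overkill when the elementary inclusion-of-colourings argument suffices.
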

\begin{proof}
	Observe that every colouring of $G$ using $k$ colours is a colouring of $U(G)$.
	However if $G$ has an induced $2$-dipath or a pair of obstructing arcs, then the converse does not hold.
	As such, there exists $k$ such that $f_o(G,k) < f(U(G),k)$.	
\end{proof}

Let $c_i(f,\lambda)$ be the coefficient of $\lambda^{n-i}$ in $f(G,\lambda)$.\\

\begin{theorem}\label{thm:Mixed3rd}
	For a mixed graph $G= (V_G,A_G,E_G)$,  we have \[ c_2(G,\lambda) = {|A_G| + |\mathcal{D}_D| + |E_G| \choose 2} - |T_G| - |\mathcal{D}_G| - |\mathcal{O}_G|,\] where $T_G$ is the set of induced subgraphs isomorphic to $K_3$ in $U(G)$.
\end{theorem}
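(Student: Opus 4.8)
The plan is to compute $c_2(G,\lambda)$ by induction on the number of "missing" pairs, i.e. pairs of vertices that are neither adjacent nor at the ends of a $2$-dipath, using the deletion–contraction-style recursion $f_o(G,\lambda) = f_o(G+uv,\lambda) + f_o(G_{uv},\lambda)$. The base case is a mixed graph $G$ in which every pair of vertices is adjacent or joined by a $2$-dipath, so $f_o(G,\lambda) = \prod_{i=0}^{n-1}(\lambda-i)$; here $c_2$ is the second elementary symmetric function of $\{0,1,\dots,n-1\}$, namely $\binom{n}{2}\binom{n-1}{2}\cdot\frac{1}{3}$ — wait, more precisely $c_2 = \sum_{0\le i<j\le n-1} ij$, and one checks directly that this equals the claimed formula in the base case: there $|A_G|+|E_G|+|\mathcal{D}_G| = \binom{n}{2}$ (every pair contributes exactly one of an arc, an edge, or a $2$-dipath, and no pair contributes two — this needs the "at most one type of adjacency" convention plus the observation that an adjacent pair is not also a pair of ends of an induced $2$-dipath), so the first term is $\binom{\binom{n}{2}}{2}$, and then I must verify $\binom{\binom{n}{2}}{2} - |T_G| - |\mathcal{D}_G| - |\mathcal{O}_G| = \sum_{i<j} ij$ for the complete-like base graph. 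This is a finite identity I would check by a short counting argument (counting pairs of "adjacency-type" objects that are not "dependent").

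For the inductive step, pick a missing pair $u,v$ and write $c_2(G,\lambda) = c_2(G+uv,\lambda) + c_2(G_{uv},\lambda)$. For $G+uv$, the vertex count is unchanged, $|E_{G+uv}| = |E_G|+1$, and I need to track how $|\mathcal{D}|$, $|T|$, $|\mathcal{O}|$ change; crucially, since $u,v$ were a missing pair, adding the edge $uv$ creates no new $2$-dipath ending at a previously-missing pair in a way that... — actually I should argue $\mathcal{D}_{G+uv} = \mathcal{D}_G$ minus any $2$-dipaths that used $uv$ as an edge, but $uv$ wasn't present, so $\mathcal{D}$ is unchanged except possibly $u,v$ themselves, who are now adjacent, hence removed — but they weren't in $\mathcal{D}_G$ either. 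So the change in the formula between $G$ and $G+uv$ is $\binom{m+1}{2} - \binom{m}{2} - (\text{new triangles}) - (\text{change in }\mathcal{O})$ where $m = |A_G|+|E_G|+|\mathcal{D}_G|$; this should come out to $m - (\text{new triangles through }uv) - (\text{change in obstructing pairs})$. For $G_{uv}$, the vertex count drops by one, and I must carefully account for arcs/edges/$2$-dipaths/triangles/obstructing-pairs that merge or get deleted under the identification. The combinatorial heart is a bijective bookkeeping: I claim that the "defect" terms $c_1(G)$-worth of new structure created by the identification in $G_{uv}$ exactly matches a corresponding count in $G$, using Theorem~\ref{thm:SopenaTheorem}(4) that $c_1 = -(|A|+|E|+|\mathcal{D}|)$.

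The cleaner route, which I would actually pursue, is this: $c_2(f_o(G,\lambda))$ is the sum over unordered pairs of "elementary defect objects" counted with a sign, mirroring how for ordinary chromatic polynomials $c_2(\Gamma) = \binom{|E|}{2} - |T_\Gamma|$. Interpreting $f_o$ via the expansion over the recursion tree, $f_o(G,\lambda) = \sum_{H} (-1)^{?}\lambda^{|V(H)|}$-type formula is not available directly, so instead I would establish the identity by showing both sides satisfy the same recursion and the same base case — the standard technique. Concretely: define the right-hand side as $R(G)$; verify $R$ satisfies $R(G) = R(G+uv) + R(G_{uv})$ for every missing pair, and $R = c_2$ on base graphs. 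Then $c_2 = R$ by induction.

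The main obstacle will be the $G_{uv}$ term: tracking exactly how $\mathcal{D}$, $T$, and especially $\mathcal{O}$ transform under vertex identification is delicate, because identifying $u$ with $v$ can turn a $2$-dipath into an arc, merge two arcs, destroy an obstructing pair by creating an incidence, or create new triangles; each of these must be matched against a term on the $G$ side. I expect to need the hypothesis that $u,v$ were neither adjacent nor $2$-dipath-ends to rule out the worst degeneracies (e.g. to ensure no loops are created and no edge becomes parallel to a pre-existing arc in a way that isn't already handled by the definition of $G_{uv}$), and to keep the correspondence between obstructing pairs clean.
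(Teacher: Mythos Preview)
Your overall plan---induction along the recursion $f_o(G,\lambda)=f_o(G+uv,\lambda)+f_o(G_{uv},\lambda)$---is the right one and is exactly what the paper does. But you have misstated the key step, and that misstatement is what makes the $G_{uv}$ bookkeeping look ``delicate'' to you when in fact it is almost trivial.

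The point is that $G_{uv}$ has $n-1$ vertices, not $n$. So when you extract the coefficient of $\lambda^{n-2}$ from both sides of the recursion you get
\[
c_2(G,\lambda) \;=\; c_2(G+uv,\lambda) \;+\; c_1(G_{uv},\lambda),
\]
not $c_2(G_{uv},\lambda)$ as you wrote. This changes everything: by Theorem~\ref{thm:SopenaTheorem}(4), $c_1(G_{uv})=-(|A_{G_{uv}}|+|E_{G_{uv}}|+|\mathcal{D}_{G_{uv}}|)$, so you never need to track how triangles or obstructing pairs behave under the identification. You only need to count arcs, edges, and induced $2$-dipaths in $G_{uv}$, which is a one-line observation (arcs/edges collapse on common neighbours; new $2$-dipaths at the merged vertex are exactly the old obstructing pairs with $u,v$ as head and tail). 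Your ``cleaner route'' $R(G)=R(G+uv)+R(G_{uv})$ is the same wrong recursion and would force exactly the painful $G_{uv}$ analysis you dread.

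Two further simplifications you are missing, both used in the paper. First, you may assume $\mathcal{D}_G=\emptyset$ at the outset: add an edge across every induced $2$-dipath; the set of oriented colourings (hence $f_o$) is unchanged and the right-hand side is unchanged as well, so the general case reduces to this one. Second, with $\mathcal{D}_G=\emptyset$ the base case is exactly a mixed complete graph, whose polynomial is $\prod_{i=0}^{n-1}(\lambda-i)$ and whose $c_2$ is the classical ${\binom{n}{2}\choose 2}-\binom{n}{3}$; you do not have to verify the formula on the larger class of ``every pair is adjacent or $2$-dipath-connected'' graphs, which is genuinely harder because $|\mathcal{D}_G|$ and $|\mathcal{O}_G|$ need not vanish there.
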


\begin{proof}
	Let $G$ be a minimum counter-example with respect to number of vertices. Among all such counter examples, let $G$ be the one that maximizes $|A_G| + |E_G| + |\mathcal{D}_G|$.
	Note that we may further assume $\mathcal{D}_G=\emptyset$ by adding an edge between every pair of vertices in $\mathcal{D}_G$. 
	The resulting mixed graph has the same set of  oriented colourings, and thus the same oriented chromatic polynomial.
	
	The oriented chromatic polynomial of a mixed complete graph on $n$ is equal to the chromatic polynomial of a complete graph on $n$ vertices. The third coefficient of such a graph is given by ${{n \choose 3}\choose 2} - {n \choose 3}$ \cite{R68}.
	Therefore the claim holds for mixed complete graphs.
	As $G$ is a minimum counter example, $G$ is not a complete mixed graph. 
	As such there exists $u,v \in V(G)$ such that $u$ and $v$ are not adjacent, nor at the ends of a $2$-dipath. 
	Therefore $c_2(G,\lambda) = c_2(G +uv,\lambda) + c_1(G_{uv},\lambda)$. By the choice of $G$, the claim holds for both  $G +uv$ and $G_{uv}$.
	
	Let $C$ be the set of common neighbours of $u$ and $v$ in $U(G)$.
	Each of these common neighbours forms a triangle in $G+uv$.
	And so $|T_G| = |T_{G+uv}| + |C|$.
	Further observe that in $G_{uv}$, the arcs/edges from $u$ and $v$ to a common neighbour $c \in C$ becomes a single adjacency in $G_{uv}$. 
	Therefore $|A_G| + |E_G| = |A_{G_{uv}}| + |E_{G_{uv}}| - |C|$.

	A pair of obstructing arcs in $G$ is not obstructing in $G + uv$ if and only if  $u$ and $v$ are the head and tail, in some order, of the pair of obstructing arcs.
	Let $\mathcal{O}_G^{uv}$ be the set of such arcs.
	No new obstructing arcs can be created by adding an edge.
	Therefore  $|\mathcal{O}_G| - |\mathcal{O}_G^{uv}|$  = $|\mathcal{O}_{G + uv}|$.
	Also note that a pair of obstructing arcs in $|\mathcal{O}_G^{uv}|$ form a $2$-dipath in $G_{uv}$ with centre vertex $uv$.
	All other induced $2$-dipaths in $G$ are retained in $G_{uv}$, as $u$ and $v$ are not the ends of a $2$-dipath.
	Therefore $|D_G| + |\mathcal{O}_G^{uv}| = |D_{G_{uv}}|$.
	
	\begin{equation}
	c_2(G,\lambda) = c_2(G +uv,\lambda) + c_1(G_{uv},\lambda)\\
	\end{equation}
	\[ ={|A_G| + |D_G| + |E_G| + 1 \choose 2} - (|T_{G+uv}| + |D_{G + uv}| + |\mathcal{O}_{G + uv}|) \]
	\[- ( |A_{G_{uv}}| + |E_{G_{uv}}| + |D_{G_{uv}}|) \]
	\[= {|A_G| + |D_G| + |E_G| + 1 \choose 2} - (|T_G| - |C| + |D_G| + |\mathcal{O}_G| - |\mathcal{O}_G^{uv}| )\]
	\[- ( |A_{G}| + |E_{G}| + |C| + |D_{G}| + |\mathcal{O}_G^{uv}|)\]
	
	\[ =  {|A_G| + |D_G| + |E_G| + 1 \choose 2} - (|A_G| + |D_G| + |E_G|) - |T_G| - |D_G| -  |\mathcal{O}_G|   \]
	
\[ =  {|A_G| + |D_G| + |E_G| \choose 2} - |T_G| - |D_G| - |\mathcal{O}_G.|\]

Thus the claim holds for $G$, contradicting the choice of $G$ as a minimum counter example.
\end{proof}

For the case $E = \emptyset$, we arrive at the desired result for oriented graphs.\\

\begin{corollary}\label{cor:Orient3rd}
For an oriented graph $G$, we have \[c_2(G,\lambda) = {|A_G| + |\mathcal{D}_G| \choose 2} - |T_G| - |\mathcal{D}_G| - |\mathcal{O}_G|. \]
\end{corollary}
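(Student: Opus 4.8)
The plan is to derive Corollary~\ref{cor:Orient3rd} as an immediate specialization of Theorem~\ref{thm:Mixed3rd}. Recall that an oriented graph is precisely a mixed graph $G = (V_G, A_G, E_G)$ with $E_G = \emptyset$. So I would begin by applying Theorem~\ref{thm:Mixed3rd} verbatim to such a $G$, obtaining
\[
c_2(G,\lambda) = {|A_G| + |\mathcal{D}_G| + |E_G| \choose 2} - |T_G| - |\mathcal{D}_G| - |\mathcal{O}_G|.
\]
Substituting $E_G = \emptyset$, i.e.\ $|E_G| = 0$, into the binomial coefficient gives exactly
\[
c_2(G,\lambda) = {|A_G| + |\mathcal{D}_G| \choose 2} - |T_G| - |\mathcal{D}_G| - |\mathcal{O}_G|,
\]
which is the claimed identity.

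The only points that warrant a remark are that the auxiliary quantities appearing in the theorem statement continue to make sense, and mean the same thing, when $E_G = \emptyset$: the set $\mathcal{D}_G$ of pairs at the ends of an induced $2$-dipath, the set $T_G$ of induced $K_3$'s in $U(G)$, and the set $\mathcal{O}_G$ of obstructing arc-pairs are all defined for arbitrary mixed graphs, hence in particular for oriented graphs, and their definitions do not degenerate or change when there are no edges. (In the obstructing-arcs condition, the clause $uy, vx \notin E_G$ is automatically satisfied, but the condition is still the correct one.) There is essentially no obstacle here; the corollary is a one-line consequence of the theorem, and the proof consists of noting that $|E_G| = 0$ in the formula of Theorem~\ref{thm:Mixed3rd}.

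If a fully self-contained argument were wanted instead, one could alternatively rerun the induction of Theorem~\ref{thm:Mixed3rd} restricted to oriented graphs, using the same minimum-counterexample setup, the same base case (tournaments, where $f_o$ agrees with the chromatic polynomial of $K_n$), and the same deletion–contraction step $c_2(G,\lambda) = c_2(G+uv,\lambda) + c_1(G_{uv},\lambda)$; but since contracting an arc in an oriented graph can create edges, the induction would no longer stay within the class of oriented graphs, so this route really forces one back to the mixed-graph setting. Hence the cleanest proof is simply the specialization above.
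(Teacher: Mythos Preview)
Your proposal is correct and matches the paper's own approach exactly: the paper derives the corollary in one line by specializing Theorem~\ref{thm:Mixed3rd} to the case $E = \emptyset$. Your additional remarks about the auxiliary quantities and the necessity of the mixed-graph setting are accurate and helpful, though the paper does not spell them out.
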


We further note that in the case $A = \emptyset$, we arrive at the usual result for the third coefficient of the chromatic polynomial: $c_2(G, \lambda) = {|E_G| \choose 2} - |T_G|$.


\section{Oriented Chromatic Equivalence} \label{sec:G=G}
A folklore construction gives an orientation $G$ of $K_{n,n}$ so that the resulting oriented graph has chromatic number $2n$ (see \cite{DMS18}).
This common example is used to convince the reader that the oriented chromatic number and the chromatic number of the underlying simple graph can be arbitrarily far apart.
We note, however that the set of colourings of $G$ using $\lambda \geq 2n$ colours is exactly that of colourings of $K_{2n}$ using $\lambda$ colours.
And so though the chromatic number of $U(G)$ differs greatly to $G$, there is still a relationship between colourings of $G$ and colourings of some simple graph.
In this section we find a set of sufficient conditions so that the $\lambda$-colourings of an oriented graph $G$ are exactly those of some simple graph $\Gamma$.
We conclude this section by using these sufficient conditions to compute the chromatic polynomial of orientations of stars.

 We are interested in the following decision problems:

\underline{\emph{CHROM-INVAR}}\\
\quad \emph{Instance:} A graph $\Gamma$.\\
\quad \emph{Question:} Is there an orientation $O(\Gamma)$ such that $f_o(O(\Gamma),\lambda) = f(\Gamma,\lambda)$?\\

\underline{\emph{OCHROM-INVAR}}\\
\quad \emph{Instance:} An oriented graph $G$.\\
\quad \emph{Question:} Does $f_o(G,\lambda) = f(U(G),\lambda)$?\\

\underline{\emph{OCHROM-EQUIV}}\\
\quad \emph{Instance:} An oriented graph $G$.\\
\quad \emph{Question:} Is there a  graph  $\Gamma$ such that $f_o(G,\lambda) = f(\Gamma,\lambda)$?\\

Figure~\ref{fig:EquivExample} gives an example of an oriented graph $G$ and a graph $\Gamma$ so that $G$ and $\Gamma$ are chromatically equivalent.

\begin{figure}	
	\begin{center}
		\includegraphics[width = \linewidth]{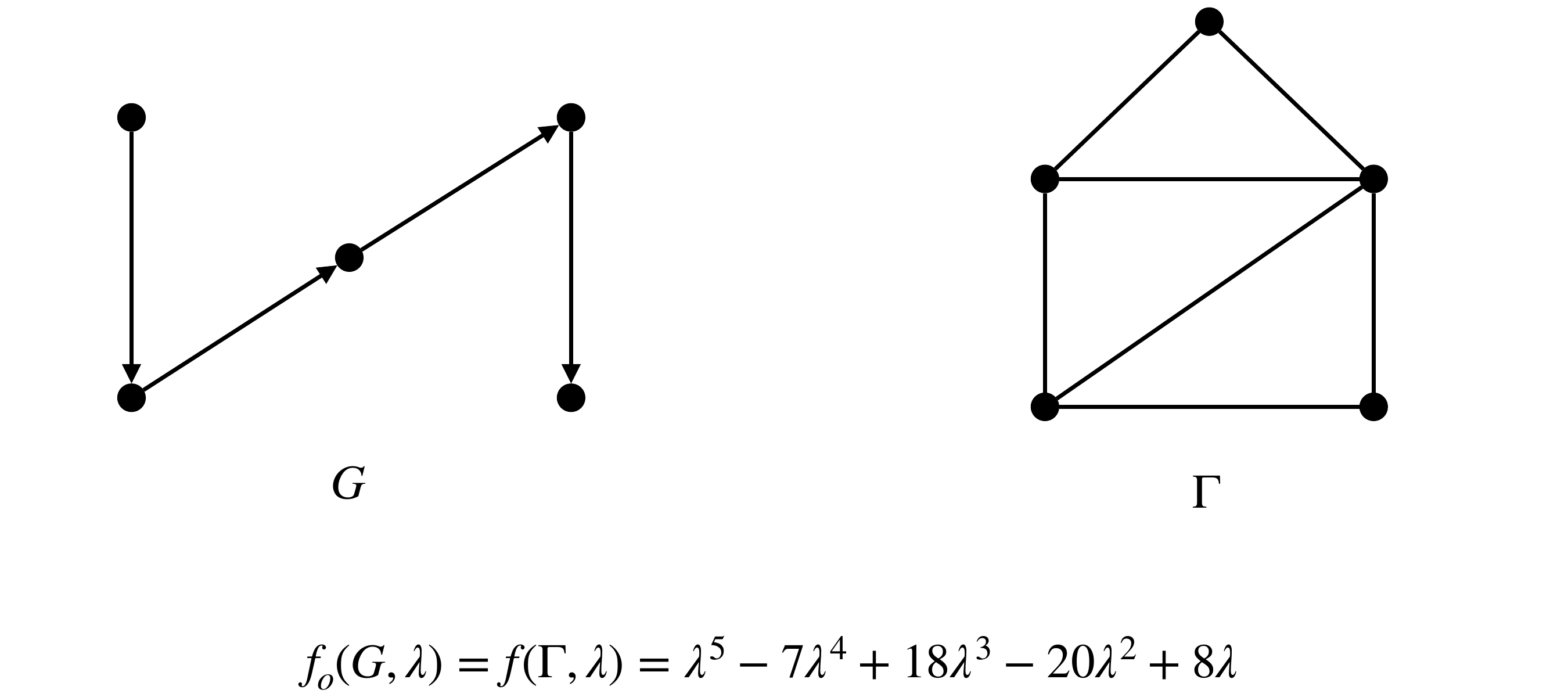}
	\end{center}
	\caption{An oriented graph and a graph with the same chromatic polynomial.}
	\label{fig:EquivExample}
\end{figure}

Let $G$ be an oriented graph.  Let $G^\star$ be the mixed graph resulting from $G$ by adding an edge between $u$ and $v$ whenever $u$ and $v$ are at the ends of an induced $2$-dipath.
We observe the following.\\

\begin{lemma}\label{lem:starEqual}
	$f_o(G,\lambda) = f_o(G^\star,\lambda)$.
\end{lemma}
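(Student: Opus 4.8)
The plan is to show that $G$ and $G^\star$ have precisely the same set of oriented $k$-colourings for every $k$, from which the equality of the interpolating polynomials follows immediately. Recall that $G^\star$ is obtained from $G$ by adding an edge $uv$ for every pair $\{u,v\}$ lying at the ends of an induced $2$-dipath, i.e.\ for every $\{u,v\} \in \mathcal{D}_G$. Since $G^\star$ is a supergraph of $G$ (on the same vertex set, with extra edges), every oriented colouring of $G^\star$ restricts to one of $G$, so one containment is trivial. The content is the reverse containment: every oriented colouring of $G$ is already an oriented colouring of $G^\star$.

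First I would fix a proper oriented colouring $c$ of $G$ and a newly added edge $uv$ of $G^\star$, witnessed by an induced $2$-dipath, say $u \to w \to v$ in $G$ (or $u \leftarrow w \leftarrow v$; the two cases are symmetric). I need to check that $c$ still satisfies both defining conditions of an oriented colouring on $G^\star$. For condition (1): because $uw$ and $wv$ are arcs of $G$ with $c(u)=c(?)$-type constraints, applying condition (2) of an oriented colouring to the pair of arcs $uw, wv \in A_G$ forces $c(u) \ne c(v)$ — this is exactly the remark in the excerpt that the ends of a $2$-dipath get distinct colours — so the new edge $uv$ is properly coloured. For condition (2) on $G^\star$: I must verify that adding $uv$ as an edge does not create a bad configuration with any other arc of $G^\star$ (note $G^\star$ has the same arc set as $G$). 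Condition (2) only constrains pairs of arcs, and edges play no role in it, so adding edges cannot violate it. Hence $c$ is an oriented colouring of $G^\star$.

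Repeating this for every added edge (the edges are added independently, one per element of $\mathcal{D}_G$, and adding one does not destroy the induced $2$-dipaths witnessing the others in the relevant sense — or more simply, $c$ being a valid colouring is checked edge-by-edge and arc-pair-by-arc-pair, so we just need it to hold for each added edge separately, which we have shown), we conclude that every oriented $k$-colouring of $G$ is an oriented $k$-colouring of $G^\star$. Combined with the trivial reverse inclusion, the colouring sets coincide for all $k$, so $f_o(G,k) = f_o(G^\star,k)$ for all positive integers $k$, and therefore the unique interpolating polynomials agree: $f_o(G,\lambda) = f_o(G^\star,\lambda)$.

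I do not expect a serious obstacle here; this is a short verification lemma. The only subtlety worth stating carefully is the point that condition (2) of an oriented colouring is a condition purely on pairs of arcs, so enlarging the edge set is harmless for it, while condition (1) on the new edges is handled by the standard fact (already noted in the excerpt, and used in Lemma~\ref{lem:obstruct}) that the endpoints of a directed $2$-dipath receive distinct colours in any oriented colouring. One should also note that $G^\star$ is genuinely a mixed graph and the notion of oriented colouring for mixed graphs used is the one defined earlier (proper on edges, oriented on arcs), so the lemma is a statement about $f_o$ of a mixed graph matching $f_o$ of an oriented graph.
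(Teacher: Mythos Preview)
Your proof is correct and follows the same strategy as the paper: show that $G$ and $G^\star$ have exactly the same set of oriented $k$-colourings for every $k$. Your version is in fact more careful than the paper's two-line argument; you spell out the nontrivial direction (every oriented colouring of $G$ extends to $G^\star$) by noting that condition~(2) involves only arcs and that condition~(1) on the new edges follows from the $2$-dipath observation, whereas the paper leaves this essentially implicit.
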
 

\begin{proof}
	$G$ is a subgraph of $G^\star$.
	Therefore $f_o(G,\lambda) \leq  f_o(G^\star,\lambda)$.
	Every oriented colouring of $G^\star$ using $k$ colours is also an oriented colouring of $G$, therefore $f_o(G,\lambda) \geq f_o(G^\star,\lambda)$.\\
\end{proof}

\begin{theorem}\label{thm:squarethm}
	For $G$, an oriented graph,  $f_o(G,\lambda) = f(U(G^\star),\lambda)$ if and only if $\mathcal{O}_G = \emptyset$.
\end{theorem}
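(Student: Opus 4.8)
The statement is an "if and only if," so I would prove the two directions separately, and the key tool throughout will be the recursive formula for $f_o$ together with Lemma~\ref{lem:starEqual}, which lets me work with the mixed graph $G^\star$ instead of $G$. Note $G^\star$ has $\mathcal{D}_{G^\star}=\emptyset$ by construction, so colourings of $G^\star$ are precisely the proper colourings of $U(G^\star)$ that additionally respect condition~(2) on the arcs; the whole question is whether condition~(2) ever rules out a proper colouring of $U(G^\star)$. Condition~(2) can fail for a proper colouring of $U(G^\star)$ only via a pair of arcs $uv,xy$ with $c(u)=c(y)$ and $c(v)=c(x)$, and for such a colouring to exist as a proper colouring of $U(G^\star)$ we need $u\not\sim y$, $v\not\sim x$ in $U(G^\star)$, and $u,y$ (resp. $v,x$) not the ends of a $2$-dipath (else the added edge would force $c(u)\neq c(y)$) — i.e. exactly that $\{uv,xy\}$ is an obstructing pair in $G^\star$. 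So morally the theorem reduces to: $f_o(G^\star,\lambda)=f(U(G^\star),\lambda)$ iff $\mathcal{O}_{G^\star}=\emptyset$, together with the bookkeeping fact that $\mathcal{O}_{G^\star}=\emptyset \iff \mathcal{O}_G=\emptyset$.

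\textbf{The easy direction.} Suppose $\mathcal{O}_G\neq\emptyset$. First I would check that then $\mathcal{O}_{G^\star}\neq\emptyset$ as well: adding edges across $2$-dipaths cannot destroy an obstructing pair, since an obstructing pair $\{uv,xy\}$ already has $u,y$ not the ends of a $2$-dipath and $v,x$ not the ends of a $2$-dipath, so no edge is added between those vertices, and the arcs themselves persist. Then I invoke Lemma~\ref{lem:obstruct} applied to $G^\star$ (noting $\mathcal O_{G^\star}\ne\emptyset$) to get $f_o(G^\star,\lambda)\neq f(U(G^\star),\lambda)$, and Lemma~\ref{lem:starEqual} finishes it. Actually this is essentially Lemma~\ref{lem:obstruct} already, so this direction is short.

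\textbf{The hard direction.} Suppose $\mathcal{O}_G=\emptyset$; I must show $f_o(G^\star,\lambda)=f(U(G^\star),\lambda)$. The cleanest route is induction on the number of non-adjacent, non-$2$-dipath-end pairs in $G^\star$, applying the recursion $f_o(G^\star,\lambda)=f_o(G^\star+uv,\lambda)+f_o((G^\star)_{uv},\lambda)$ in parallel with the deletion–contraction recursion $f(U(G^\star),\lambda)=f(U(G^\star)+uv,\lambda)+f(U(G^\star)/uv,\lambda)$ for the chromatic polynomial. The base case is when every pair of vertices of $G^\star$ is adjacent or at the ends of a $2$-dipath — but since $\mathcal{D}_{G^\star}=\emptyset$ there are no $2$-dipaths to worry about, so $U(G^\star)$ is complete and $f_o(G^\star,\lambda)=\prod_{i=0}^{n-1}(\lambda-i)=f(K_n,\lambda)$. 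For the inductive step I need: (a) $\mathcal{O}_{G^\star+uv}=\emptyset$, which is immediate since adding an edge only removes obstructing pairs; and (b) $\mathcal{O}_{(G^\star)_{uv}}=\emptyset$ — this is the crux. I would argue that identifying two vertices $u,v$ that are neither adjacent nor $2$-dipath-ends cannot create an obstructing pair: any arc incident to the merged vertex $[uv]$ came from an arc at $u$ or at $v$, and I would trace through the three defining conditions of "obstructing" to show that a would-be obstructing pair in $(G^\star)_{uv}$ pulls back to an obstructing pair, a $2$-dipath that got contracted, or a triangle in $G^\star$ — in every case contradicting $\mathcal{O}_{G^\star}=\emptyset$ or the hypothesis on $u,v$. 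This case analysis on how arcs merge at $[uv]$ is where the real work lies, and I would want to be careful that contracting does not, e.g., turn a $2$-dipath $u\to w\to x$ and an arc at $v$ into an obstructing configuration; the hypothesis that $u,v$ are not $2$-dipath-ends is exactly what prevents the pathological merges. Once (a) and (b) hold, induction applies to both $G^\star+uv$ and $(G^\star)_{uv}$ (the latter after re-$\star$-ing, using Lemma~\ref{lem:starEqual} again to align $(G^\star)_{uv}$ with its starred version), and summing the two equalities against the deletion–contraction identity for $f$ closes the induction — provided I also confirm $U((G^\star)_{uv})=U(G^\star)/uv$ as simple graphs, i.e. that the "delete parallel arcs/edges" step in the mixed contraction matches ordinary graph contraction, which it does because $u$ and $v$ are non-adjacent. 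The main obstacle, then, is step (b): showing obstructing pairs are not created under contraction of a non-adjacent non-$2$-dipath pair.
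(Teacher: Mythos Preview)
Your proposal is correct, but for the hard direction you take a substantially more elaborate route than the paper does---and, curiously, you already sketch the paper's own argument in your opening ``plan'' paragraph and then abandon it.

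The paper's proof of the hard direction is direct and essentially three lines. Having reduced (via Lemma~\ref{lem:starEqual}) to showing $f_o(G^\star,\lambda)=f(U(G^\star),\lambda)$, it simply observes: any proper colouring $c$ of $U(G^\star)$ that is not an oriented colouring of $G^\star$ must violate condition~(2), and a violation of condition~(2) by a proper colouring forces either an induced $2$-dipath or a pair of obstructing arcs in $G^\star$. Since $G^\star$ has neither (the former by construction, the latter because $\mathcal{O}_G=\emptyset\iff\mathcal{O}_{G^\star}=\emptyset$), every proper colouring of $U(G^\star)$ is already an oriented colouring of $G^\star$, and the polynomials agree. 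This is exactly your sentence ``Condition~(2) can fail for a proper colouring of $U(G^\star)$ only via \dots\ i.e.\ exactly that $\{uv,xy\}$ is an obstructing pair in $G^\star$''---that sentence \emph{is} the proof, and nothing further is needed.

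Your inductive approach via the addition--identification recursion does work, but it buys you nothing here and costs you the case analysis in step~(b) (showing contraction of a non-adjacent pair preserves $\mathcal{O}=\emptyset$) plus the bookkeeping of aligning the mixed-graph recursion for $f_o$ with addition--contraction for $f$. One point you gloss over: for the two recursions to match on the contracted side you need $f_o((G^\star)_{uv},\lambda)=f(U(G^\star)/uv,\lambda)$, which after re-starring becomes $f(U(((G^\star)_{uv})^\star),\lambda)=f(U((G^\star)_{uv}),\lambda)$; this is only immediate if $\mathcal{D}_{(G^\star)_{uv}}=\emptyset$. That does in fact hold (any new $2$-dipath through the merged vertex comes from arcs $a\to u$, $v\to b$ or similar, and the non-obstruction of $\{au,vb\}$ in $G^\star$ forces $a,b$ to be adjacent), but it is another case check you would owe. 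The paper's colouring argument sidesteps all of this.
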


\begin{proof}
	By Lemma \ref{lem:starEqual} it suffices to show $f_o(G^\star,\lambda) = f(U(G^\star),\lambda)$.
	Notice that $G$ has no obstructing arcs if and only if $G^\star$ has no obstructing arcs.
	By Lemma \ref{lem:obstruct} it suffices to show that if $G^\star$ has no obstructing arcs, then every colouring of $U(G^\star)$ is an oriented colouring of $G$.
	Let $c$ be a colouring of $U(G^\star)$.
	Since $c$ is a proper colouring, if $c$ is not an oriented colouring of $G^\star$, then the second condition of oriented colouring as been violated.
	However, this not possible as $G^\star$ has neither an induced $2$-dipath nor a pair of obstructing arcs.
\end{proof}

Recall the result of Corollary \ref{cor:Orient3rd}.
If $G$ has no pair of obstructing arcs, then in $G^\star$ we have $\mathcal{O}_G =\mathcal{D}_G = \emptyset$. 
And so $c_2(G^\star,\lambda) = {|A_G| + |E_G| \choose 2} - |T_G|$.
Notice that this is exactly the third coefficient of the chromatic polynomial of $U(G^\star)$.\\

\begin{corollary}\label{cor:G=G}
	An oriented graph $G$ is chromatically invariant if and only if $G$ has no induced $2$-dipath and $U(G)$ is $2K_2$-free.
\end{corollary}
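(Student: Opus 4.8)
The plan is to derive this from Theorem \ref{thm:squarethm} by pinning down exactly when the passage from $G$ to $U(G^\star)$ changes nothing. First I would argue the forward direction: suppose $G$ is chromatically invariant, i.e.\ $f_o(G,\lambda) = f(U(G),\lambda)$. By Lemma \ref{lem:obstruct}, chromatic invariance forces $\mathcal{O}_G = \emptyset$ and $\mathcal{D}_G = \emptyset$; the latter says $G$ has no induced $2$-dipath. Since $\mathcal{D}_G = \emptyset$, no edges are added in forming $G^\star$, so $G^\star = G$ and $U(G^\star) = U(G)$. It then remains to show $U(G)$ is $2K_2$-free. The natural tool is the third-coefficient formula: with $\mathcal{O}_G = \mathcal{D}_G = \emptyset$, Corollary \ref{cor:Orient3rd} gives $c_2(G,\lambda) = \binom{|A_G|}{2} - |T_G|$, while the chromatic polynomial of $U(G)$ has $c_2 = \binom{|E_{U(G)}|}{2} - |T_{U(G)}|$; since $|A_G| = |E_{U(G)}|$ and $|T_G| = |T_{U(G)}|$ these agree, so the third coefficient gives no obstruction, and I will instead need a direct argument: if $U(G)$ contained an induced $2K_2$, then the two disjoint edges of $G$ lying over it, being non-incident arcs with no $2$-dipaths between their endpoints (induced $2K_2$ means no edges among the four vertices, and absence of induced $2$-dipaths is already known), would form a pair of obstructing arcs, contradicting $\mathcal{O}_G = \emptyset$. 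Hence $U(G)$ is $2K_2$-free.

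For the converse, suppose $G$ has no induced $2$-dipath and $U(G)$ is $2K_2$-free. Since there are no induced $2$-dipaths, $\mathcal{D}_G = \emptyset$, so $G^\star = G$. I claim $\mathcal{O}_G = \emptyset$: any pair of obstructing arcs $uv, xy$ would be non-incident, have no edges $uy$ or $vx$ in $U(G)$, and have no $2$-dipaths between the relevant endpoint pairs; in particular there would be no $uy$ or $vx$ edge, and obstruction requires also that $u,y$ and $v,x$ are non-adjacent (otherwise the pair would be covered), so in fact the only possible adjacencies among $\{u,v,x,y\}$ beyond $uv$ and $xy$ are $ux$ and $vy$ — but one checks (this is the one small case analysis) that any such adjacency together with the arc structure would either create an induced $2$-dipath or still leave an induced $2K_2$ on an appropriate pair of the arcs; pushing this through, an obstructing pair forces an induced $2K_2$ in $U(G)$, contradicting $2K_2$-freeness. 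So $\mathcal{O}_G = \emptyset$, and by Theorem \ref{thm:squarethm}, $f_o(G,\lambda) = f(U(G^\star),\lambda) = f(U(G),\lambda)$, so $G$ is chromatically invariant.

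The step I expect to be the main obstacle is the precise correspondence between "pair of obstructing arcs" and "induced $2K_2$ in $U(G)$" in the presence of the already-established condition that $G$ has no induced $2$-dipath. The definition of obstructing arcs rules out certain $2$-dipaths and the edges $uy, vx$, but a priori permits the arcs to be oriented so that, say, $u \to v$ and $x \to y$ with $u \to x \to v$ forming a dipath — except that this would be an induced $2$-dipath (if $uv \notin E$), which is excluded. Carefully enumerating the orientations of the (at most) four possible extra adjacencies $ux, uy, vx, vy$ among the two arcs and checking that, once induced $2$-dipaths are forbidden, "obstructing" is equivalent to "the four vertices induce exactly $2K_2$" is the technical heart; I would organize it as a short table of cases on which of $ux, uy, vx, vy$ are present and their orientations, showing each non-$2K_2$ configuration either violates the obstruction definition or creates an induced $2$-dipath.
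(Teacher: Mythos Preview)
Your argument is correct and matches the paper's: both directions hinge on Lemma~\ref{lem:obstruct} and Theorem~\ref{thm:squarethm}, with the crux being that (once $G$ has no induced $2$-dipath) a pair of obstructing arcs forces an induced $2K_2$ in $U(G)$ and conversely. The case analysis you flag as the main obstacle dissolves almost immediately---any arc on $\{u,x\}$ or $\{v,y\}$ produces a $2$-dipath whose ends lie in $\{u,y\}$ or $\{v,x\}$, contradicting the obstructing condition itself, so the four vertices must induce exactly $2K_2$; the paper accordingly handles this direction with the single phrase ``it follows directly,'' and your third-coefficient detour is unnecessary.
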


\begin{proof}
	Assume $G$ has no induced $2$-dipath and that $U(G)$ is $2K_2$-free.
	It follows directly that $G$ has no obstructing arcs and that $G = G^\star$.
	The conclusion follows by Theorem \ref{thm:squarethm}.
	
	Let $G$ be a chromatically invariant oriented graph.
	By definition, every proper $k$-colouring of $U(G)$ is an oriented colouring of $G$.
	Therefore $G$ has no induced $2$-dipath, nor does $G$ contain a pair of obstructing arcs.
	Since $G$ contains no pair of obstructing arcs, if $U(G)$ contains an induced copy of $2K_2$, say $uv, xy$, then without loss of generality, there must be an induced $2$-dipath between $u$ and $y$. 
	This contradicts that $G$ contains no induced $2$-dipath. 
	Therefore $G$ has no induced $2$-dipath and $U(G)$ is $2K_2$-free.
\end{proof}

Introduced by Ghouila-Houri, oriented graphs that contain no induced $2$-dipath are called  \emph{quasi-transitive}.\\

\begin{theorem}\cite{GH62}\label{thm:qtGraphs}
	A graph $\Gamma$ admits a quasi-transitive orientation if and only if $\Gamma$ is a comparability graph.
\end{theorem}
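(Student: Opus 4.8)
The plan is to prove the two implications separately. The forward implication is immediate: if $\Gamma$ is a comparability graph then, by definition, it admits a transitive orientation $D$, and in $D$ the two ends $u,w$ of any directed path $u\to v\to w$ are joined by the arc $u\to w$; hence $D$ has no induced $2$-dipath, so $D$ is quasi-transitive. All of the work is in the reverse implication: from the mere existence of a quasi-transitive orientation of $\Gamma$ we must deduce that $\Gamma$ admits a transitive orientation.

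For the reverse implication I would invoke the forcing (implication-class) theory of comparability graphs. On the set of all arcs of $\Gamma$ (each edge taken in both of its orientations) define the forcing relation $\rho$ by declaring $(a\to b)\,\rho\,(a'\to b')$ whenever $a=a'$ and $bb'\notin E(\Gamma)$, or $b=b'$ and $aa'\notin E(\Gamma)$; the equivalence relation generated by $\rho$ partitions the arcs into implication classes, and a classical theorem about comparability graphs (Gallai; see also Golumbic) states that $\Gamma$ is a comparability graph if and only if no implication class contains an arc together with its reverse. It therefore suffices to show that the existence of a quasi-transitive orientation $D$ rules this out. The key step is that $D$, regarded as a set of arcs, is closed under a single application of $\rho$: if $a\to b$ lies in $D$ and $(a\to b)\,\rho\,(a'\to b')$, then $a'\to b'$ lies in $D$. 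When $a=a'$ and $bb'\notin E(\Gamma)$, were $b'\to a$ in $D$ instead then $b'\to a\to b$ would be an induced $2$-dipath, which is impossible; the case $b=b'$, $aa'\notin E(\Gamma)$ is symmetric. Iterating, the whole implication class of any arc of $D$ lies inside $D$; since $D$ is an orientation it never contains an arc and its reverse, hence neither does any implication class, and the cited theorem yields that $\Gamma$ is a comparability graph.

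The genuinely hard content is thus outsourced to the implication-class characterization of comparability graphs, whose own proof rests on the ``Triangle Lemma''; the argument above is really just the observation that a quasi-transitive orientation is automatically ``$\rho$-consistent'', which is precisely the obstruction-free condition. If a more self-contained route is wanted, the alternative I would pursue is induction on $|V(\Gamma)|$ via the structure theorem for quasi-transitive digraphs (Bang-Jensen and Huang): a non-strong quasi-transitive digraph is a modular substitution of strong quasi-transitive digraphs into an acyclic transitive digraph, and a strong one is a modular substitution of (single-vertex or non-strong) quasi-transitive digraphs into a tournament. Since the underlying graphs of acyclic transitive digraphs and of tournaments are comparability graphs and comparability graphs are closed under modular substitution, the induction would close at once; the price is that proving that structure theorem then becomes the main obstacle.
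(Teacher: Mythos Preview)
The paper does not supply a proof of this theorem at all; it is quoted from Ghouila-Houri \cite{GH62} as a classical result and used as a black box in the classification of chromatically invariant oriented graphs. There is therefore nothing in the paper to compare your argument against.

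Your proof is correct and is essentially the standard modern route (as in Golumbic's treatment). One minor labelling slip: what you call the ``forward implication'' (comparability $\Rightarrow$ quasi-transitive) is really the ``if'' direction as the theorem is stated, but the content is fine. The substantive direction is handled cleanly: the key observation that a quasi-transitive orientation $D$ is closed under a single application of the forcing relation $\rho$ is exactly right, and it immediately forces every implication class to lie entirely inside $D$ or entirely inside its reverse, whence no class contains an arc and its reversal. Invoking the Gallai implication-class characterization then finishes. Your alternative via the Bang-Jensen--Huang decomposition is also valid, and as you say simply relocates the hard work.
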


Notice that the family of $2K_2$-free comparability graphs is exactly the family of co-interval graphs. And so combining Corollary \ref{cor:G=G} and Theorem \ref{thm:qtGraphs} yields the following classification.\\

\begin{theorem}\label{thm:everyOrient}
An oriented graph is chromatically invariant if and only if it is a quasi-transitive orientation of a co-interval graph.
\end{theorem}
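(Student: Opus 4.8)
The plan is to derive Theorem~\ref{thm:everyOrient} as a more or less immediate consequence of the results already assembled in this section, so the real content is a short chain of equivalences rather than a fresh argument. First I would invoke Corollary~\ref{cor:G=G}: an oriented graph $G$ is chromatically invariant precisely when $G$ has no induced $2$-dipath and $U(G)$ is $2K_2$-free. The first condition is, by the definition attributed to Ghouila-Houri just above, exactly the statement that $G$ is a quasi-transitive orientation of $U(G)$. So it remains to translate ``$U(G)$ is $2K_2$-free and admits a quasi-transitive orientation'' into ``$U(G)$ is a co-interval graph.''

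For that translation I would proceed in two steps. First, apply Theorem~\ref{thm:qtGraphs}: a graph admits a quasi-transitive orientation if and only if it is a comparability graph. Hence the underlying graphs of chromatically invariant oriented graphs are exactly the $2K_2$-free comparability graphs. Second, I would use the structural fact recalled in the sentence preceding the theorem, namely that the class of $2K_2$-free comparability graphs coincides with the class of co-interval graphs (equivalently, a graph is a co-interval graph iff its complement is an interval graph, and interval graphs are exactly the graphs that are both chordal and comparability-complement; dually, being a comparability graph with no induced $2K_2$ forces the complement to have no induced $C_4$ and to be chordal, which characterizes interval graphs). I would cite this as known rather than reprove it, since the excerpt treats it as folklore (``Notice that the family of $2K_2$-free comparability graphs is exactly the family of co-interval graphs'').

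Assembling the pieces: $G$ chromatically invariant $\iff$ ($G$ has no induced $2$-dipath) and ($U(G)$ is $2K_2$-free) $\iff$ ($G$ is a quasi-transitive orientation of $U(G)$) and ($U(G)$ is a $2K_2$-free comparability graph) $\iff$ $G$ is a quasi-transitive orientation of a co-interval graph. One direction needs a small remark: given a co-interval graph $\Gamma$, it is a comparability graph, so by Theorem~\ref{thm:qtGraphs} it \emph{has} a quasi-transitive orientation, and any such orientation has no induced $2$-dipath and underlying graph $2K_2$-free, hence is chromatically invariant by Corollary~\ref{cor:G=G}; conversely a chromatically invariant $G$ is quasi-transitive with $U(G)$ being $2K_2$-free and (being quasi-transitively orientable) a comparability graph, so $U(G)$ is co-interval.

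I do not anticipate a genuine obstacle here, since every ingredient is either proved earlier in the paper or quoted from the literature; the only point requiring care is making sure the quasi-transitive orientation in Theorem~\ref{thm:qtGraphs} is the \emph{same} orientation as the given $G$ rather than merely some orientation of $U(G)$ — but Corollary~\ref{cor:G=G} already phrases the invariance condition purely in terms of properties ($G$ has no induced $2$-dipath; $U(G)$ is $2K_2$-free) that are invariant under which quasi-transitive orientation is chosen, so this subtlety dissolves. The write-up will therefore be only a few lines, essentially a restatement of Corollary~\ref{cor:G=G} followed by substitution of the two characterizations.
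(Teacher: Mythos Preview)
Your proposal is correct and matches the paper's approach exactly: the paper also derives the theorem as an immediate combination of Corollary~\ref{cor:G=G}, Theorem~\ref{thm:qtGraphs}, and the observation that $2K_2$-free comparability graphs are precisely the co-interval graphs. Your remark about the ``same orientation'' subtlety is a nice bit of care that the paper leaves implicit.
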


This theorem fully classifies chromatically invariant oriented graphs, as well as those graphs that admit a chromatically invariant orientation.
This closes an open problem posed by Sopena in \cite{S94} and provides a geometric interpretation of chromatically invariant oriented graphs as co-interval graphs.

Theorem \ref{thm:everyOrient} implies that the decision problems \emph{CHROM-INVAR} and \emph{OCHROM-INVAR} can be restated in terms of  co-interval graph recognition.
As co-interval graphs can be identified in linear time \cite{S03}, we arrive at the following classification  of  \emph{CHROM-INVAR} and \emph{OCHROM-INVAR}.

\begin{corollary}\label{thm:complexity}
	The decision problems \emph{CHROM-INVAR} and \emph{OCHROM-INVAR}  are \emph{Polynomial}.
\end{corollary}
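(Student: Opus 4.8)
The plan is to reduce \emph{CHROM-INVAR} and \emph{OCHROM-INVAR} directly to recognition of co-interval graphs, and then invoke the known linear-time recognition algorithm. For \emph{OCHROM-INVAR} the input is an oriented graph $G$; by Theorem \ref{thm:everyOrient}, $G$ is chromatically invariant if and only if $G$ is a quasi-transitive orientation of a co-interval graph. So the algorithm first extracts $U(G)$, then checks two things: (i) that $U(G)$ is a co-interval graph, and (ii) that the given orientation $G$ is quasi-transitive. For \emph{CHROM-INVAR} the input is a graph $\Gamma$, and by the same theorem $\Gamma$ admits a chromatically invariant orientation if and only if $\Gamma$ is a co-interval graph; here only check (i) is needed. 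In both cases, once we know check (i) and (possibly) check (ii) can be done in polynomial time, we are done.

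The key steps, in order, are as follows. First, state precisely that by Theorem \ref{thm:everyOrient} both decision problems have a characterization purely in terms of the underlying graph being co-interval (plus, for \emph{OCHROM-INVAR}, a local condition on the orientation). Second, cite the result of \cite{S03} (or any standard interval-graph recognition result, since $\Gamma$ is a co-interval graph iff its complement is an interval graph, and interval graphs are recognizable in linear time, e.g.\ via PQ-trees or Lex-BFS) that co-interval graph recognition is in \emph{Polynomial}. Third, observe that testing quasi-transitivity of a given orientation is also polynomial: by definition $G$ is quasi-transitive iff it contains no induced directed path of length two, i.e.\ for every vertex $w$ with an in-arc $uw$ and an out-arc $wv$ we must have $u$ and $v$ adjacent in $U(G)$; this can be checked by examining all such triples, which is polynomial in $|V_G|$. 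Fourth, conclude that the whole procedure runs in polynomial time, so both problems are in \emph{Polynomial}.

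I do not expect any genuine obstacle here — the statement is essentially a packaging of Theorem \ref{thm:everyOrient} together with a citation. The only point requiring a sentence of care is the quasi-transitivity check for \emph{OCHROM-INVAR}: one must note that an oriented graph may \emph{have} a co-interval underlying graph while the particular orientation supplied is not quasi-transitive (hence not chromatically invariant), so the orientation itself must be verified, not merely the underlying graph. That check is straightforward, so the corollary follows immediately. If one wished to be fully explicit, one could also remark that verifying $U(G)$ is $2K_2$-free and that $G$ has no induced $2$-dipath (the formulation of Corollary \ref{cor:G=G}) is itself directly polynomial, giving a self-contained argument that does not even need the comparability-graph machinery of Theorem \ref{thm:qtGraphs}; but routing through co-interval recognition is the cleaner phrasing and matches the surrounding exposition.
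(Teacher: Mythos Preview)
Your proposal is correct and follows essentially the same route as the paper: reduce both problems to co-interval graph recognition via Theorem~\ref{thm:everyOrient} and invoke the linear-time recognition result of \cite{S03}. You are in fact slightly more careful than the paper's two-sentence justification, since you explicitly note that for \emph{OCHROM-INVAR} one must also verify (in polynomial time) that the \emph{given} orientation is quasi-transitive, not merely that $U(G)$ is co-interval.
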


We note that such an orientation of a co-interval graph need not be unique (up to converse).
There are many methods in the literature (for example see \cite{B08}) that give a quasi-transitive orientation of a comparability graph.
A common element of these methods is the construction of an auxiliary graph, $Aux(G)$, so that a $2$-colouring of $G$ corresponds to a quasi-transitive orientation. 
Such constructions imply that a comparability graph has a unique quasi-transitive orientation (up to converse) if and only if $Aux(G)$ is connected.
As such constructions can be carried out in polynomial time, we find that given a co-interval graph $\Gamma$, one may find in polynomial time an orientation of $\Gamma$, $O(\Gamma)$, so that $f(\Gamma,\lambda) = f_o(O(\Gamma), \lambda)$.

%
%

We now consider an application of Theorem \ref{thm:squarethm} and find the oriented chromatic polynomial of the family of orientations of stars. Let $S_{i,o}$ be the orientation of a star on $i+o+1$ vertices, with centre vertex $x$, so that $x$ has $i$ in-neighbours and $o$ out-neighbours. \\

\begin{corollary}
	$f_o(S_{i,o}) = \lambda\cdot f(K_{i,o},\lambda-1)$
\end{corollary}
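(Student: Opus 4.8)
The plan is to apply Theorem~\ref{thm:squarethm} to $S_{i,o}$ by first understanding the structure of $S_{i,o}^\star$. In $S_{i,o}$, the only $2$-dipaths pass through the centre vertex $x$: every in-neighbour of $x$ is joined by a $2$-dipath (through $x$) to every out-neighbour of $x$, and these are the only induced $2$-dipaths. Consequently, forming $S_{i,o}^\star$ adds an edge between each of the $i$ in-neighbours and each of the $o$ out-neighbours. Moreover $S_{i,o}$ has no pair of obstructing arcs: any two distinct arcs share the vertex $x$, so there are no two non-incident arcs at all, and $\mathcal{O}_{S_{i,o}} = \emptyset$. Hence Theorem~\ref{thm:squarethm} gives $f_o(S_{i,o},\lambda) = f(U(S_{i,o}^\star),\lambda)$.

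Next I would identify the simple graph $U(S_{i,o}^\star)$. Its vertex set is $\{x\}$ together with the set $I$ of $i$ in-neighbours and the set $O$ of $o$ out-neighbours. Its edges are: $x$ joined to every vertex of $I\cup O$ (the original star edges, now undirected), together with all edges between $I$ and $O$ (the $2$-dipath edges added in the $\star$ operation). There are no edges within $I$ and none within $O$. So $U(S_{i,o}^\star)$ is exactly the join of the single vertex $x$ with the complete bipartite graph $K_{i,o}$ on parts $I$ and $O$ — equivalently, $K_{i,o}$ with one universal vertex added.

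Finally I would compute the chromatic polynomial of this graph using the standard fact that adding a universal vertex to a graph $H$ multiplies the chromatic polynomial by $\lambda$ and decrements the argument: if $H'$ is $H$ with a universal vertex adjoined, then $f(H',\lambda) = \lambda\cdot f(H,\lambda-1)$, since the universal vertex may be coloured in $\lambda$ ways and the remaining graph $H$ must then be properly coloured with the remaining $\lambda-1$ colours. Applying this with $H = K_{i,o}$ yields $f(U(S_{i,o}^\star),\lambda) = \lambda\cdot f(K_{i,o},\lambda-1)$, and combining with the previous paragraph gives $f_o(S_{i,o},\lambda) = \lambda\cdot f(K_{i,o},\lambda-1)$, as claimed.

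I do not expect a serious obstacle here: the two potential sticking points are (i) verifying carefully that the only induced $2$-dipaths are the $I$--$O$ pairs through $x$ and that no obstructing arcs exist — both immediate from the star structure — and (ii) the universal-vertex identity for chromatic polynomials, which is routine. The only mild subtlety is the degenerate cases $i = 0$ or $o = 0$ (no $2$-dipaths, $S_{i,o}^\star = S_{i,o}$, and $K_{i,o}$ is edgeless), where one should check the formula still reads correctly; this falls out of the same computation.
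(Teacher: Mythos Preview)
Your proposal is correct and follows exactly the same route as the paper's proof: verify $\mathcal{O}_{S_{i,o}}=\emptyset$, apply Theorem~\ref{thm:squarethm}, identify $U(S_{i,o}^\star)$ as $K_{i,o}$ plus a universal vertex, and use the standard universal-vertex identity for chromatic polynomials. Your write-up is more detailed than the paper's (which compresses all this into three sentences), but the argument is the same.
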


\begin{proof}
	Observe that $S_{i,o}^\star$ has no obstructing arcs. 
	Further observe that  $U(S_{i,o}^\star)$ consists of a copy of $K_{i,o}$ together with a universal vertex.
	By Theorem \ref{thm:squarethm}, we have $f_o(S_{i,o}) = \lambda \cdot f(K_{i,o},\lambda-1)$.\\
\end{proof}

Conversely, using the results from Section \ref{sec:OPoly}, one can find families of oriented graphs for which there is no  chromatically equivalent graph.
Let $G$ be an orientation of $tK_2$ for some $t > 1$.
Recalling the notation of the previous section we have
\begin{itemize}
	\item $|A_G| = t$;
	\item $|E_G| = |\mathcal{D}_G| = 0$;
	\item $|\mathcal{O}_G| = {t \choose 2}$; and
	\item $|T_G| = 0$.
\end{itemize}

If there exists $\Gamma$ such that $f_o(G,\lambda) = f(\Gamma,\lambda)$, then by Theorem \ref{thm:SopenaTheorem} and Corollary \ref{cor:Orient3rd}, it must be that $\Gamma$ has $2t$ vertices, $t$ edges and ${t \choose 2}$ copies of $K_3$.
A simple counting argument implies that no such $\Gamma$ can exist.\\

\begin{conjecture}\label{conj:G=H}
	Let $G$ be an oriented graph. There is a graph $\Gamma$ such that $G$ and $\Gamma$  are chromatically equivalent if and only if $G$ has no obstructing pairs of arcs.
\end{conjecture}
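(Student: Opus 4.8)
The plan is to prove the two implications separately, the reverse one being immediate and the forward one requiring real work. If $\mathcal{O}_G = \emptyset$, then Theorem~\ref{thm:squarethm} gives $f_o(G,\lambda) = f(U(G^\star),\lambda)$, so $\Gamma := U(G^\star)$ witnesses chromatic equivalence; this direction is already contained in the machinery above. So the content of the conjecture is the forward direction: if $\mathcal{O}_G \neq \emptyset$ then no simple graph is chromatically equivalent to $G$.

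For the forward direction I would first normalize the problem. Suppose for contradiction that $\mathcal{O}_G \neq \emptyset$ and $f_o(G,\lambda) = f(\Gamma,\lambda)$ for some graph $\Gamma$. By Lemma~\ref{lem:starEqual}, $f_o(G,\lambda) = f_o(G^\star,\lambda)$, and passing to $G^\star$ preserves the arc set and introduces no new $2$-dipaths, so $\mathcal{O}_{G^\star} = \mathcal{O}_G \neq \emptyset$ while $\mathcal{D}_{G^\star} = \emptyset$; hence I may assume $G$ is a quasi-transitive mixed graph with $\mathcal{D}_G = \emptyset$ and $\mathcal{O}_G \neq \emptyset$. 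Reading coefficients off both polynomials: Theorem~\ref{thm:SopenaTheorem} forces $|V_\Gamma| = n$ and $|E_\Gamma| = m := |A_G| + |E_G|$, and Theorem~\ref{thm:Mixed3rd} together with the classical identity $c_2(\Gamma,\lambda) = \binom{|E_\Gamma|}{2} - |T_\Gamma|$ forces
\[ |T_\Gamma| = |T_G| + |\mathcal{O}_G|. \]
Thus $\Gamma$ must be a graph with the same number of vertices and edges as $U(G)$ but strictly more triangles — and the task reduces to showing this is impossible.

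The crux (Plan A) is a Kruskal--Katona argument. Since $U(G)$ is an $m$-edge graph it has at most $t_{\max}(m)$ triangles, where $t_{\max}(m)$ is the maximum number of triangles in any $m$-edge graph; the same bound applies to $\Gamma$, so $|\mathcal{O}_G| = |T_\Gamma| - |T_G| \le t_{\max}(m) - |T_G|$. Hence it suffices to prove the purely combinatorial inequality $|\mathcal{O}_G| > t_{\max}(m) - |T_G|$ for every quasi-transitive mixed graph with $\mathcal{O}_G \neq \emptyset$. Three sub-claims organize this: (i) if $U(G)$ is $2K_2$-free then $\mathcal{O}_G = \emptyset$ (this follows from Corollary~\ref{cor:G=G} via Lemma~\ref{lem:obstruct}, since a quasi-transitive $2K_2$-free orientation is chromatically invariant), so the relevant graphs contain an induced $2K_2$; (ii) in the ``dense'' regime, where $t_{\max}(m) - |T_G|$ is small, $U(G)$ is structurally close to a colex (triangle-extremal) graph, and one shows that such a near-extremal graph carrying an induced $2K_2$ and a quasi-transitive orientation with one obstructing pair is in fact forced to have more obstructing pairs than its triangle deficiency; (iii) in the ``sparse/spread-out'' regime a single induced $2K_2$ far from the bulk of the edges already generates $\Omega(m)$ obstructing pairs (this is exactly what makes the $tK_2$ example work), comfortably beating the deficiency.

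I expect sub-claim (ii) to be the main obstacle: giving a clean, uniform lower bound on $|\mathcal{O}_G|$ in terms of the triangle deficiency of $U(G)$ in the near-extremal regime, since colex-like graphs are fairly rigid and bookkeeping the obstructing pairs forced by a planted induced $2K_2$ may need a delicate case analysis. If that bound resists a uniform proof, the fallback (Plan B) is to push the coefficient method one step further: by the same minimal-counterexample-plus-reduction technique used for Theorem~\ref{thm:Mixed3rd}, derive a closed form for the $\lambda^{n-3}$ coefficient $c_3(G,\lambda)$ of a mixed graph — it will involve counts of $K_4$'s, induced $P_4$'s and $C_4$'s in $U(G)$ together with new ``second-order obstruction'' gadgets built from $\mathcal{O}_G$ and $\mathcal{D}_G$ — and then show the system matching $c_1, c_2, c_3$ against the Whitney--Farrell expansion of the chromatic polynomial of $\Gamma$ has no simple-graph solution once $\mathcal{O}_G \neq \emptyset$; log-concavity of chromatic-polynomial coefficients (Read's conjecture, proved by Huh) supplies an additional constraint to exploit. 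My recommendation is to attempt Plan A first, falling back to Plan B (and, for the finitely many remaining small dense configurations, to an exhaustive computer search) if the structural inequality proves stubborn.
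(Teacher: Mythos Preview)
The statement you are attempting to prove is labelled in the paper as a \emph{conjecture}; the authors do not prove it and explicitly leave the forward direction open, so there is no proof in the paper against which to compare yours. Your observation that the reverse implication follows at once from Theorem~\ref{thm:squarethm} (take $\Gamma=U(G^\star)$) is correct and is exactly what the paper records.

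For the forward direction, your plan has a genuine gap. After replacing $G$ by $G^\star$ you work entirely in the class of mixed graphs with $\mathcal{D}=\emptyset$ and $\mathcal{O}\neq\emptyset$, and Plan~A then argues at that level of generality. But the paper, in the paragraph immediately following the conjecture, produces a mixed graph $H$ (two disjoint arcs together with an edge between their heads and an edge between their tails) with $\mathcal{D}_H=\emptyset$, $\mathcal{O}_H\neq\emptyset$, and $f_o(H,\lambda)=f(\Gamma,\lambda)$ for a simple graph $\Gamma$. For this $H$ one has $m=4$, $|T_H|=0$, $|\mathcal{O}_H|=1$ and $t_{\max}(4)=1$, so your key inequality $|\mathcal{O}_G|>t_{\max}(m)-|T_G|$ becomes $1>1$ and fails. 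Moreover $U(H)\cong C_4$ is $2K_2$-free while $\mathcal{O}_H\neq\emptyset$, so your sub-claim~(i) is also false in this setting; the justification you offer for it via Corollary~\ref{cor:G=G} is a statement about oriented graphs, not about the mixed graph you have passed to.

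The upshot is that any argument along these lines must use, in an essential way, that the normalized object is not an arbitrary quasi-transitive mixed graph but one of the special form $G^\star$ for $G$ oriented (and indeed $H$ is not of this form). Neither your Kruskal--Katona bound nor your sub-claims invoke that structure, so as written the plan cannot succeed; Plan~B inherits the same defect, since matching higher coefficients will likewise be satisfied by $H$ and its simple-graph partner.
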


When restricted to oriented graphs that contain no pair of obstructing arcs, \emph{OCHROM-EQUIV} is Polynomial -- every instance is a YES instance.
However, for arbitrary inputs is not clear if \emph{OCHROM-EQUIV} is contained in NP, as constructing the chromatic polynomial of a graph is NP-hard.
We conjecture, however, that those oriented graphs with no obstructing arcs are the only oriented graphs whose $\lambda$-colourings have a one-to-one correspondence with the $\lambda$-colourings of some graph $\Gamma$.
If Conjecture \ref{conj:G=H} is true, then \emph{OCHROM-EQUIV} is Polynomial for arbitrary inputs.

Conjecture \ref{conj:G=H} is not true when we allow $G$ to be a mixed graph, even when we require $A \neq \emptyset$.
Let $H$ be the mixed graph formed from a pair of disjoint arcs by adding an edge between the heads and an edge between the tails.
Using the reduction given in Section \ref{sec:OPoly}, we find $f_o(H,\lambda) = \lambda^4 - 4\lambda^3 + 5\lambda^2 -2\lambda$.
Notably, this is exactly the chromatic polynomial of the example, $G$, given in Figure \ref{fig:ReduceExample}.
The oriented graph $G$ has $\mathcal{D}_G = \emptyset$.
Therefore $G = G^\star$.
Further $\mathcal{O}_G = \emptyset$.
And so by Theorem \ref{thm:squarethm} we have  $f(U(G),\lambda) = f_o(G,\lambda) = f_o(H,\lambda)$.
In particular, $H$ has a pair of obstructing arcs, but yet there is graph $\Gamma$ such that $f_o(H,\lambda) = f(\Gamma,\lambda)$.
From this example, one may generate an example on $n$ vertices for any $n \geq 4$ by repeatedly adding universal vertices to $H$ and $U(G)$.

\section{Roots of Oriented Chromatic Polynomials}\label{sec:roots}
The location of the roots of polynomials has been well studied for a variety of graph polynomials, such the independence, domination, reliability and chromatic polynomials. 
In this section we  provide  results regarding the roots of the oriented chromatic polynomial. Chromatic polynomials have roots that are dense in the complex plane \cite{sokalroots}. 
Their coefficients alternate in sign and hence have no negative real roots \cite{read}. 
We show the following:

\begin{theorem}
For every integer $k > 0$, there exists an oriented graph $G$ so that $f_o(G,\lambda)$ has a root $k^\prime$ so that $k^\prime < -k$.

\end{theorem}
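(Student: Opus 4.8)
# Proof Proposal

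The plan is to build, for each $k > 0$, an explicit oriented graph whose oriented chromatic polynomial has a root strictly less than $-k$, and to control that root via the leading coefficients provided by Theorem~\ref{thm:SopenaTheorem} and Corollary~\ref{cor:Orient3rd}. The natural candidate family is built from orientations of $tK_2$, since the computation just before Conjecture~\ref{conj:G=H} already records that for an orientation $G$ of $tK_2$ we have $|A_G| = t$, $|E_G| = |\mathcal{D}_G| = 0$, $|T_G| = 0$, and $|\mathcal{O}_G| = \binom{t}{2}$. Hence on $n = 2t$ vertices,
\[
f_o(G,\lambda) = \lambda^{2t} - t\,\lambda^{2t-1} + \left( \binom{t}{2} - \binom{t}{2} \right)\lambda^{2t-2} + \cdots = \lambda^{2t} - t\,\lambda^{2t-1} + 0\cdot\lambda^{2t-2} + \cdots,
\]
so the $\lambda^{n-2}$ coefficient vanishes. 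The vanishing (indeed, eventual positivity) of low-order coefficients is exactly the phenomenon that forbids these from being chromatic polynomials of graphs, and it is the lever for producing negative roots.

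First I would make this precise by computing $f_o(G,\lambda)$ for a convenient orientation of $tK_2$ in closed form using the recursion of Section~\ref{sec:OPoly} (or Sopena's construction directly): each arc $a_i b_i$ contributes, and the reduction on non-adjacent, non-$2$-dipath pairs telescopes. I expect a clean product or sum formula — for the ``all arcs parallel'' orientation one should get something like $f_o(G,\lambda) = \sum_{j} \binom{t}{j} \lambda^{\underline{2t-j}}$-type expressions, or more simply a formula where $f_o(G,-1)$, $f_o(G,-2)$, etc., can be evaluated. Second, I would evaluate $f_o(G,\lambda)$ at a sequence of negative integers and show a sign change. Because the coefficients do not alternate (the $\lambda^{n-2}$ coefficient is $0$ while the $\lambda^n$ and $\lambda^{n-1}$ coefficients are $+$ and $-$), plugging in negative values no longer forces a fixed sign; in fact for large $t$ one can show $f_o(G,\lambda)$ takes a negative value at some negative integer, while it is clearly positive (of degree $2t$, even) as $\lambda \to -\infty$ and at $\lambda$ near $0$ from the left. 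By the intermediate value theorem this yields a real root in $(-\infty, -k)$ once $t$ is chosen large enough relative to $k$.

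The key quantitative step — and the main obstacle — is showing that the root can be pushed arbitrarily far to the left, i.e.\ that choosing $t$ large forces a root below $-k$ rather than merely some fixed negative root. For this I would either (i) track where the sign change of $f_o(G,\lambda)$ on the negative axis occurs as a function of $t$, using the closed form to show the ``last'' sign change migrates toward $-\infty$, or (ii) avoid $tK_2$ and instead take a more flexible family: add a long directed path or many parallel $2$-dipaths to amplify $|\mathcal{O}_G|$ relative to $|A_G| + |\mathcal{D}_G|$, so that $c_2(G,\lambda) = \binom{|A_G|+|\mathcal{D}_G|}{2} - |T_G| - |\mathcal{D}_G| - |\mathcal{O}_G|$ becomes very negative; a large negative $\lambda^{n-2}$ coefficient against a positive leading term and a bounded lower-order tail forces a root of large magnitude by a Cauchy-type bound on root location. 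Option (ii) is cleaner: I would pick $G_k$ so that $|\mathcal{O}_{G_k}|$ grows like $k^2$ while $|A_{G_k}|$, $|E_{G_k}|$, and the number of vertices grow only linearly in $k$, making $c_2(G_k,\lambda) \le -C k^2$; then the standard estimate that any root $\rho$ satisfies $|\rho| \le 1 + \max_i |c_i|$ is not what I want (wrong direction), so instead I would use that if $f(\lambda) = \lambda^n + a_1\lambda^{n-1} + a_2 \lambda^{n-2} + \cdots$ has $a_2$ hugely negative, then $f(-k) = k^n - a_1 k^{n-1} + a_2 k^{n-2} - \cdots$, and I show the $a_2 k^{n-2}$ term dominates in sign for a suitable regime, forcing $f(-k) < 0$ while $f(-\infty) > 0$. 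Balancing the growth rates of $n$, $a_1$, $a_2$ is the delicate part; I would finish by exhibiting one fully explicit family (e.g.\ $t$ parallel arcs plus a few extra vertices) where all these quantities are written down and the inequality $f_o(G_k, -k) < 0 < f_o(G_k, -M)$ for $M \gg k$ is verified by direct substitution, then invoke continuity.
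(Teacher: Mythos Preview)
Your proposal is a plan rather than a proof, and the plan has a concrete failure at its centre. The $tK_2$ family you lead with does \emph{not} produce arbitrarily negative real roots. Writing $m=\binom{\lambda}{2}$, an oriented $\lambda$-colouring of $t$ disjoint arcs is a length-$t$ sequence of ordered pairs from $[\lambda]$ no two of which are reverses; stratifying by the number $j$ of distinct unordered pairs used gives
\[
f_o(tK_2,\lambda)=\sum_{j=1}^{t} S(t,j)\,2^{j}\,m^{\underline{j}}.
\]
For $t=3$ this equals $2m(4m^{2}-6m+3)=(u-1)^{3}+1$ with $u=\lambda(\lambda-1)$, whose only real root in $u$ is $u=0$; hence $f_o(3K_2,\lambda)$ has real roots only at $\lambda=0,1$ and \emph{no} negative real root at all. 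So already at $t=3$ your ``option (i)'' collapses, and there is no sign change on the negative axis to track.

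Your ``option (ii)'' --- make $c_2$ hugely negative and infer $f_o(G,-k)<0$ --- cannot work either. Knowing $c_0,c_1,c_2$ imposes no control on $c_3,\dots,c_{n-1}$, and it is precisely these terms that decide the sign of $f_o(G,-k)$; you acknowledge this by retreating at the end to ``exhibit one fully explicit family and check by direct substitution'', which is the step you never carry out. Note also that $|\mathcal O_G|\le\binom{|A_G|}{2}$, so amplifying $|\mathcal O_G|$ alone (with $|A_G|$ linear in $k$) yields only $c_2\ge -|T_G|-|\mathcal D_G|$, not the $c_2\le -Ck^2$ you want.

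The paper takes a different route that sidesteps all of this. It uses the family $D_n$ consisting of a directed path $v_1v_2v_3v_4$ together with $n-4$ further in-leaves at $v_4$, for which a direct colour count gives the closed form
\[
f_o(D_n,\lambda)=\lambda(\lambda-1)(\lambda-2)\bigl((\lambda-2)^{n-4}+(\lambda-3)(\lambda-1)^{n-4}\bigr).
\]
For even $n$ one checks $f_o(D_n,-n)>0$ (since $(1+\tfrac{1}{n+1})^{n-4}<e<n+3$) and $f_o(D_n,-\ln n)<0$ for all large $n$ (since $(1+\tfrac{1}{\ln n+1})^{n-4}$ eventually dominates $\ln n+3$); the intermediate value theorem then gives a real root below $-\ln n$, hence below $-k$ once $n$ is large. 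The moral is that you need an explicit family whose polynomial you can actually write down and evaluate; coefficient bookkeeping from Corollary~\ref{cor:Orient3rd} is not enough.
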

\begin{proof}

Let $D_n$ be the oriented graph on $n$ vertices that consists of a directed path $v_1,v_2,v_3,v_4$ with $n-4$ leaves, $x$ directed from $x$ to $v_4$.

The oriented chromatic polynomial can be computed in the following way. Vertex $v_1$ has $\lambda$ colour choices, $v_2$ has $\lambda -1$ colour choices and $v_3$ has $\lambda-2$ colour choices. 
Now for $v_4$ we have two options. If $v_4$ is the same colour as $v_1$ then the $n-4$ leaves have $\lambda-2$ choices of colour, since it can not be the same colour at $v_4$ or $v_2$. 
If $v_4$ is a different colour than the others in the path it has $\lambda-3$ colour choices and the leaves have $\lambda -1$ colour choices. 
This means 
\[ f_o(D_n,\lambda)=\lambda(\lambda-1)(\lambda-2)((\lambda-2)^{n-4}+(\lambda-3)(\lambda-1)^{n-4})\]

The polynomial $f_o(D_n,\lambda)$ has real roots at $\lambda=0,1,2$. 
We show  we can obtain arbitrarily large negative real roots by showing that a real root exists between $\lambda=-n$ and $\lambda=-\ln(n)$ for $n$ even.

Observe,  
\[ f_o(D_n,-n)=(-1)^3n(n+1)(n+2)((-1)^{n-4}(n+2)^{n-4}+(-1)^{n-3}(n+3)(n+1)^{n-4})\]

It can be shown that $(n+2)^{n-4}<(n+3)(n+1)^{n-4}$ if 
\[ \Big( 1+\frac{1}{n+1} \Big)^{n-4}<n+3.\] 
The quantity $\Big( 1+\frac{1}{n+1} \Big)^{n-4}$ is bounded above by $e$, and $e<n+3$ for all values of $n$, therefore $f_o(D_n,-n)>0$ for all even values of $n$.

Now consider $f_o(D_n,-\ln(n))$. 
\begin{eqnarray*}
f_o(D_n,-\ln(n))&=&(-1)^3(\ln(n))(\ln(n)+1)(\ln(n)+2)*\\
&&\Big((-1)^{n-4}(\ln(n)+2)^{n-4}+(-1)^{n-3}(\ln(n)+3)(\ln(n)+1)^{n-4}\Big)
\end{eqnarray*}

Clearly $(\ln(n)+2)^{n-4}>(\ln(n)+3)(\ln(n)+1)^{n-4}$ when 
\[ \Big( 1+\frac{1}{\ln(n)+1}\Big)^{n-4} > \ln(n)+3.\]

Let $g(n)=\Big( 1+\frac{1}{\ln(n)+1}\Big)^{n-4} -\ln(n)-3$. The derivative of this function is\\
$g^{\prime}(n)=\left( 1+ \left( \ln  \left( n \right) +1
 \right) ^{-1} \right) ^{n-4} \left( \ln  \left( 1+ \left( \ln 
 \left( n \right) +1 \right) ^{-1} \right) -{\frac {n-4}{ \left( \ln 
 \left( n \right) +1 \right) ^{2}n \left( 1+ \left( \ln  \left( n
 \right) +1 \right) ^{-1} \right) }} \right) -\frac{1}{n}$.

It is the case that $\lim_{n\to \infty}g'(n)=\infty$, thus there exists $N$ so that for all $n>N$, $g(n)$ is an increasing function, as the derivative  of $g$ is positive and hence $g(n)>0$ and $f_o(D_n,-\ln(n))<0$ for large values of $n$.

It then follows by the intermediate value theorem that  $f_o(D_n,\lambda)$ can have an arbitrarily large negative root.
\end{proof}

A chromatic polynomial cannot have root in the interval $(-\infty,0) \cup (0,1) \cup (1,\frac{32}{27})$ \cite{read}. 
We have shown that oriented chromatic polynomials can have negative real roots. 
In addition, there exist oriented chromatic polynomials in the interval $(0,1)$, as $f_o(D_5,\lambda)$ has a root at $\lambda=\frac{3}{2}-\frac{\sqrt{5}}{2}$.
Open problems regarding the roots of oriented chromatic polynomials include: does there exist an oriented graph whose real roots lie in $(1,\frac{32}{27})$? What is the closure of the complex roots for the oriented chromatic polynomial?


\section{Conclusion}
The study of oriented graphs often goes hand-in-hand with that of signed graphs.
Though the methods contained herein will extend to the study of chromatic polynomials of signed graphs, there will be a marked difference in the classification of chromatically invariant signed graphs.
For example, letting all the edges of  $2K_2$ be positive leads to a chromatically invariant signed graph. 
However, all possible orientations of $2K_2$ leads to an oriented graph that is not chromatically invariant.
Of course, with this approach, every graph can have edge signs trivially assigned so that the resulting signed graph has the same chromatic polynomial as the underlying graph.
And so one may require that there is at least one edge of each sign.
With this added restriction it is unclear if chromatically invariant signed graphs can be identified in polynomial time, as we expect the characterization to require that signs be given so that there is no $2K_2$ where the edges have different signs.
Similarly, the generalization of signed graphs and oriented graphs to $(m,n)$-mixed graphs should yield a definition of a chromatic polynomial that obeys the reduction outlined in Section \ref{sec:OPoly}. 
Consequently we expect the results of Theorems \ref{thm:SopenaTheorem} and \ref{thm:Mixed3rd} to generalize in the same manner.
One may also ask, then, for which graphs $\Gamma$ is there an assignment of arcs, edges, and corresponding colours, so that the resulting $(m,n)$-mixed graph is chromatically invariant.
We have shown  that for oriented graphs that $\Gamma$ must be a co-interval graph.

\section*{Acknowledgments}
The authors thank Gary MacGillivray for discussions regarding the recursive construction of the oriented chromatic polynomial.

\bibliographystyle{abbrv}
\bibliography{references.bib}

\begin{thebibliography}{10}

\bibitem{B08}
J.~Bang-Jensen and G.~Z. Gutin.
\newblock {\em Digraphs: theory, algorithms and applications}.
\newblock Springer Science \& Business Media, 2008.

\bibitem{BHG88}
J.~Bang-Jensen, P.~Hell, and G.~MacGillivray.
\newblock The {Complexity} of {Colouring} by {Semicomplete} {Digraphs}.
\newblock {\em SIAM Journal on Discrete Mathematics}, 1(3):281--298, 1988.

\bibitem{BDS17}
J.~Bensmail, C.~Duffy, and S.~Sen.
\newblock Analogues of cliques for {$(m, n)$}-colored mixed graphs.
\newblock {\em Graphs and Combinatorics}, 33(4):735--750, 2017.

\bibitem{jackson}
B.Jackson.
\newblock A zero-free interval for the chromatic polynomials of graphs.
\newblock {\em Combinatorics, Probability and Computing}, 2:325--336, 1993.

\bibitem{thomassen}
B.Jackson.
\newblock The zero-free interval for the chromatic polynomials of graphs.
\newblock {\em Combinatorics, Probability and Computing}, 6:497--506, 1997.

\bibitem{bondy}
J.~Bondy and U.~Murty.
\newblock {\em Graph Theory}.
\newblock Number 244 in Graduate {Texts} in {Mathematics}. Springer, 2008.

\bibitem{BO99}
O.~Borodin, A.~Kostochka, J.~Ne\v{s}et\v{r}il, A.~Raspaud, and E.~Sopena.
\newblock On the {Maximum} {Average} {Degree} and the {Oriented} {Chromatic}
  {Number} of a {Graph}.
\newblock {\em Discrete Mathematics}, 206:77--89, 1999.

\bibitem{CO94}
B.~Courcelle.
\newblock The {Monadic} {Second} {Order} {Logic} of {Graphs} {V}{I}: On several
  representations of graphs by relational structures.
\newblock {\em Discrete Applied Mathematics}, 54:117--149, 1994.

\bibitem{DMS18}
C.~Duffy, G.~MacGillivray, and E.~Sopena.
\newblock Oriented colourings of graphs with maximum degree three and four.
\newblock {\em Discrete Mathematics}.
\newblock To appear.

\bibitem{DS14}
J.~Dybizba{\'n}ski and A.~Szepietowski.
\newblock The oriented chromatic number of {H}alin graphs.
\newblock {\em Information Processing Letters}, 114(1):45--49, 2014.

\bibitem{FRR03}
G.~Fertin, A.~Raspaud, and A.~Roychowdhury.
\newblock On the {Oriented} {Chromatic} {Number} of {Grids}.
\newblock {\em Inform. Proc. Letters}, 85:261--266, 2003.

\bibitem{GH62}
A.~Ghouila-Houri.
\newblock Caract\'{e}risation des graphes non orient\'{e}s dont on peut
  orienter les ar\v{e}tes de mani\`{e}re \`{a} obtenir le graphe d'une relation
  d'ordre.
\newblock {\em C. R. Acad. Sci. Paris}, 254:1370 -- 1371, 1962.

\bibitem{KM04b}
W.~F. Klostermeyer and G.~MacGillivray.
\newblock Pushing vertices and oriented colorings.
\newblock {\em Bull. Inst. Combin. Appl}, 40:49--58, 2004.

\bibitem{NS01}
J.~Ne{\v{s}}etril and E.~Sopena.
\newblock On the oriented game chromatic number.
\newblock {\em Electron. J. Combin}, 8(2):R14, 2001.

\bibitem{OPS08}
P.~Ochem, A.~Pinlou, and E.~Sopena.
\newblock On the {O}riented {C}hromatic {I}ndex of {O}riented {G}raphs.
\newblock {\em Journal of Graph Theory}, 57(4):313--332, 2008.

\bibitem{RASO94}
A.~Raspaud and E.~Sopena.
\newblock {Good} and {Semi}-{Strong} {Colorings} of {Oriented} {Graphs}.
\newblock {\em Information Processing Letters}, 51:171--174, 1994.

\bibitem{read}
R.~Read and W.~Tutte.
\newblock {\em Selected Topics in Graph Theory 3, Chapter: Chromatic
  Polynomial}.
\newblock Number 244. Academic Press Ltd., 1988.

\bibitem{R68}
R.~C. Read.
\newblock An introduction to chromatic polynomials.
\newblock {\em Journal of Combinatorial Theory}, 4(1):52--71, 1968.

\bibitem{sokalroots}
A.~Sokal.
\newblock Chromatic roots are dense in the whole complex plane.
\newblock {\em Combinatorics, Probability and Computing}, 13(2):221--261, 2004.

\bibitem{S94}
E.~Sopena.
\newblock {\em Computing chromatic polynomial of oriented graphs}.
\newblock Universit{\'e} de Bordeaux 1/CNRS. Laboratoire Bordelais de Recherche
  en Informatique [LaBRI], 1994.

\bibitem{SO15}
E.~Sopena.
\newblock Homomorphisms and {Colourings} of {Oriented} {Graphs}: An updated
  survey.
\newblock {\em Discrete Mathematics}, 339(7):1993--2005, 2016.

\bibitem{S03}
J.~P. Spinrad.
\newblock {\em Efficient graph representations}.
\newblock American Mathematical Society, 2003.

\bibitem{T01}
Z.~Tuza and M.~Voigt.
\newblock Oriented list colorings of graphs.
\newblock {\em Journal of Graph Theory}, 36(4):217--229, 2001.

\end{thebibliography}
\section*{Appendix}
\includepdf[pages=-,pagecommand={},width=\textwidth]{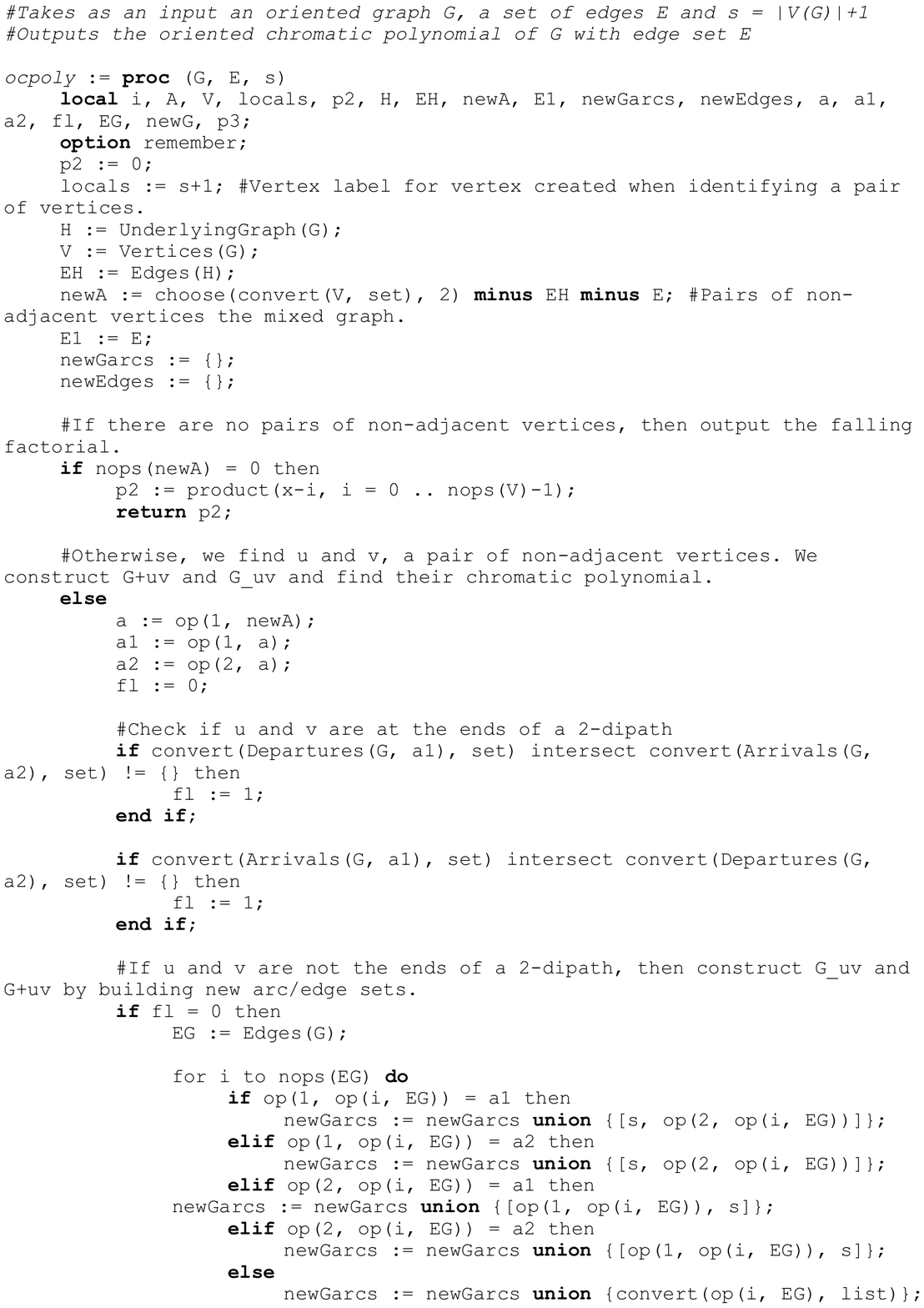}
\end{document}